\title{Small Extended Formulation for Knapsack Cover Inequalities from Monotone Circuits}
\author{Abbas Bazzi \\ EPFL, Switzerland \\ abbas.bazzi@epfl.ch
\and Samuel Fiorini \\ Universit\'e libre de Bruxelles, Belgium \\ sfiorini@ulb.ac.be
\and Sangxia Huang \\ EPFL, Switzerland \\ huang.sangxia@gmail.com
\and Ola Svensson \\ EPFL, Switzerland \\ola.svensson@epfl.ch}
\newcommand{\R}{\mathbb{R}}
\newcommand{\N}{\mathbb{N}}
\newcommand{\Z}{\mathbb{Z}}
\newcommand{\conv}{\mathrm{conv}}
\newcommand{\xc}{\mathrm{xc}}
\newcommand{\nnegrk}{\mathrm{rk}_+}
\newcommand{\CCexp}{R^\mathrm{cc}_{\mathrm{exp}}}
\newcommand{\CCKW}{D^{\mathrm{cc}}_{\mathrm{mon-KW}}}
\newcommand{\demand}{D}
\newcommand{\rdemand}{U}
\newcommand{\size}{s}
\newcommand{\rsize}{s'}
\newcommand{\largeitems}{I_{\mathrm{large}}}
\newcommand{\smallitems}{I_{\mathrm{small}}}
\newcommand{\apxrdemand}{\widetilde{\rdemand}}
\newcommand{\apxsmallcontrib}{\widetilde{\sigma}}
\newcommand{\depth}{t}
\newcommand{\Tconstr}{T_\mathrm{constr}}
\newcommand{\Tsolve}{T_\mathrm{solve}}
\newcommand{\Tsep}{T_\mathrm{sep}}
\newcommand{\preach}{p_\mathrm{reach}}
\newcommand{\pbranch}{p_\mathrm{branch}}
\newcommand{\qbranch}{q_\mathrm{branch}}
\newcommand{\xsize}{x}
\newtheorem{theorem}{Theorem}
\newtheorem{Definition}{Definition}
\newtheorem{Lemma}[theorem]{Lemma}
\begin{document}
\begin{titlepage} 
\maketitle
\thispagestyle{empty}

\begin{abstract}
Initially developed for the min-knapsack problem, the knapsack cover
inequalities are used in the current best relaxations for numerous 
combinatorial optimization problems of covering type. In spite of 
their widespread use, these inequalities yield linear programming (LP)
relaxations of exponential size, over which it is not known how to 
optimize exactly in polynomial time. In this paper we address this 
issue and obtain LP relaxations of quasi-polynomial size that are at
least as strong as that given by the knapsack cover inequalities.

For the min-knapsack cover problem, our main result can be stated 
formally as follows: for any $\varepsilon >0$, there is a
$(1/\varepsilon)^{O(1)}n^{O(\log n)}$-size LP relaxation with 
an integrality gap of at most $2+\varepsilon$, where $n$ is the
number of items. Prior to this work, there was no known relaxation 
of subexponential size with a constant upper bound on the integrality 
gap. 

Our construction is inspired by a connection between extended
formulations and monotone circuit complexity via Karchmer-Wigderson games.
In particular, our LP is based on $O(\log^2 n)$-depth monotone circuits with
fan-in~$2$ for evaluating weighted threshold functions with $n$ inputs, as
constructed by Beimel and Weinreb. We believe that a further understanding 
of this connection may lead to more positive results complementing the 
numerous lower bounds recently proved for extended formulations. 
\end{abstract}
\end{titlepage}

\section{Introduction} \label{sec:intro}
\emph{Capacitated covering problems}\footnote{The term ``capacitated'' is used
in the literature to emphasize that the entries of matrix $A$ can take any
non-negative value in contrast to the uncapacitated version where entries are
Boolean.}  play a central role in combinatorial optimization. These are the
problems modeled by Integer Programs (IPs) of the form  $\min\{\sum_{i=1}^n
c_i x_i \mid Ax \geqslant b,\ x \in \{0,1\}^n\}$, where $A$ is a size-$m \times
n$ nonnegative matrix and $b, c$ size-$n$ nonnegative vectors. The
\emph{min-knapsack problem} is the special case arising when there is a single
covering constraint, that is, when $m = 1$. This is arguably the simplest
interesting capacitated covering problem. 

In terms of complexity, the min-knapsack problem is well-understood: on the one
hand it is weakly NP-hard~\cite{Karp72} and on the other hand it admits an
FPTAS~\cite{Lawler79,PVW85}. However, for its own sake and since it appears as
a key substructure of numerous other IPs, improving our polyhedral
understanding of the problem is important. By this, we mean finding ``good'' linear programming (LP)
relaxations for the min-knapsack problem. Indeed, the polyhedral study of this
problem has led to the development of important tools, such as the
knapsack cover inequalities, for the strengthening of LP
relaxations. These inequalities and generalizations thereof are now used in the current best known
relaxations for several combinatorial optimization problems, such as
single-machine scheduling~\cite{BP10} and capacitated facility location~\cite{AnSS14}. However, despite
this important progress in the past, many fundamental questions remain open
even in the most basic setting. 

\paragraph{State of the Art.} The feasible region of a min-knapsack instance is
specified by positive \emph{item sizes} $\size_1, \ldots, \size_n$ and
a positive \emph{demand} $D$. In this context, a vector $x \in \{0,1\}^n$ is \emph{feasible} if
$\sum_{i=1}^n \size_i x_i \geqslant \demand$. To specify completely an instance
of the min-knapsack problem, we are further given nonnegative \emph{item costs}
$c_1, \ldots, c_n$. Solving the resulting instance then amounts to solving the
IP $\min \{\sum_{i=1}^n c_i x_i \mid \sum_{i=1}^n \size_i x_i \geqslant
\demand,\ x \in \{0,1\}^n \}$.

The \emph{basic} LP relaxation, i.e.,  $\min \{\sum_{i=1}^n c_i x_i \mid \sum_{i=1}^n
\size_i x_i \geqslant \demand,\ x \in [0,1]^n\}$,  provides an estimate on the
optimum value that can be quite bad. More precisely, defining the
\emph{integrality gap} as the supremum over all instances of the ratio of the
optimum value of the IP to the optimum value of the LP relaxation, it is easy
to see that the integrality gap is unbounded. 

Several inequalities have been proposed for strengthening this basic LP
relaxation.  Already in the 70's, Balas~\cite{Balas75}, Hammer, Johnson and
Peled~\cite{HJP75} and Wolsey~\cite{Wolsey75} independently proposed to add the
\emph{uncapacitated} knapsack cover inequalities: for every subset $A \subseteq
[n]$ of the items such that $\sum_{i\in A} s_i < D$, add the inequality
$\sum_{i\not \in A} x_i \geqslant 1$ (saying that at least one item in $[n] \setminus
A$ needs to be picked in order to satisfy the demand). Unfortunately, these
(exponentially many) inequalities are not sufficient for bringing down the
integrality gap to a constant.
A strengthening of these inequalities was therefore proposed more recently by
Carr, Fleischer, Leung and Philipps~\cite{CFLP00}. They defined  
the following valid inequalities: for every set of items $A \subseteq
[n] := \{1,\ldots,n\}$ such that $\sum_{i \in A} \size_i < D$, there is
a corresponding (capacitated) \emph{knapsack cover inequality}
\begin{equation}
\label{eq:KC}
\sum_{i \notin A} \rsize_i x_i \geqslant \rdemand 
\end{equation}
where $\rdemand = \rdemand(A) := D - \sum_{i \in A} \size_i$ is the \emph{residual demand} and $\rsize_i = \rsize_i(A) := \min \{\size_i, \rdemand\}$. The validity of~(\ref{eq:KC}) is due to the fact that every feasible solution $x \in \{0,1\}^n$ has to contain some object $i \notin A$. This object can be \emph{large}, that is, have $\size_i \geqslant \rdemand$, and in this case the inequality is clearly satisfied. Otherwise, in case every object $i \notin A$ is \emph{small}, 
the total size of the objects $i \notin A$ picked by $x$ has to be at least the residual demand $\rdemand$. 

Carr \emph{et al.}~\cite{CFLP00} proved that whenever $x \in \R^n_{\geqslant 0}$ satisfies all knapsack cover inequalities, $2x$ dominates a convex combination of feasible solutions, that is, there exist feasible solutions $x^{(j)} \in \{0,1\}^n$ ($j \in [q]$) and coefficients $\lambda_j \geqslant 0$ summing up to $1$ such that $2x \geqslant \sum_{j=1}^q \lambda_j x^{(j)}$. Given any nonnegative item costs, one of the $x^{(j)}$ will have a cost that is at most $2$ times that of $x$. This implies that the integrality gap of the corresponding LP relaxation is at most~$2$.

The LP relaxation defined by the knapsack cover inequalities is ``good'' in the
sense that it has a constant integrality gap. However, it has exponential
\emph{size}, that is, exponentially many inequalities, over which it is not
known how to optimize exactly in polynomial time; in particular, it is not known how to employ the Ellipsoid algorithm because the problem of separating the knapsack cover inequalities reduces to another knapsack problem (which is NP-hard in general).
%we do not know if there exists an efficient separation oracle for the exact knapsack cover inequalities.

In contrast, for the \emph{max-knapsack problem}, Bienstock~\cite{Bienstock08}
proved that for all $\varepsilon > 0$ there exists
a size-$n^{O(1/\varepsilon)}$ LP relaxation whose integrality gap\footnote{For
maximization problems, one takes the supremum of the ratio of the optimum value
of the LP relaxation to the optimum value of the IP.} is at most $1
+ \varepsilon$. That LP is defined by an extended formulation that uses
$n^{O(1/\varepsilon)}$ extra variables besides the $x$-variables. We remark
that it is a notorious open problem to prove or disprove the existence of
a $f(1/\varepsilon) \cdot n^{O(1)}$-size LP relaxation for max-knapsack with
integrality gap at most $1 + \varepsilon$, see e.g.~the survey on extended
formulations by Conforti, Cornu\'ejols and
Zambelli~\cite{ConfortiCornuejolsZambelli10}. Coming back to the min-knapsack
problem, it is not known whether there exists a polynomial-size LP relaxation
with constant integrality gap or not.\footnote{We remark that Bienstock and
  McClosky~\cite{Bienstock2012} considered the easier case when the relaxation is
allowed to depend on the objective function to be optimized (i.e., on the cost
of the items). In this case, using techniques similar to those developed for
polynomial time approximation schemes, they obtained polynomial size
relaxations with integrality gap at most $1+\varepsilon$, for any fixed
$\varepsilon>0$. This is, however, a very different setting and, as the developed inequalities depend on the objective function, they do not generalize to other problems.} 

\paragraph{Main Result.} We come close to resolving the question and show that
min-knapsack admits a quasi-polynomial-size LP relaxation with integrality gap
at most $2 + \varepsilon$. The upper bound on the integrality gap originates
from the fact that our LP relaxation is at least as strong as that provided by
a slightly weakened form of the knapsack cover inequalities. We point out that,
under some conditions, we can bound the size of our relaxation by a polynomial,
see Section~\ref{sec:sub:protocol}. A more precise statement of our main result
is as follows.

\begin{theorem} \label{thm:main}
For all $\varepsilon \in (0,1)$, item sizes $\size_1, \ldots, \size_n \in \R_+$
and demand $D \in \R_+$, there exists a size-$(1/\varepsilon)^{O(1)} n^{O(\log
n)}$ extended formulation defining an LP relaxation of min-knapsack with
integrality gap at most $2+\varepsilon$.
\end{theorem}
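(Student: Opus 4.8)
The plan is to construct the extended formulation in two stages, mirroring the abstract's hint about Karchmer--Wigderson games and the Beimel--Weinreb circuits. First I would reduce to a \emph{clean} version of the knapsack cover inequalities: since the exact inequalities~(\ref{eq:KC}) involve a separation problem that is itself knapsack-hard, I expect we cannot hope to capture them exactly in quasi-polynomial size, so the first step is to fix a parameter and discretize. Concretely, for each choice of a ``guessed'' set of large items (items whose residual size is truncated) and a rounded value of the residual demand $\rdemand$, we get a weighted threshold function on the small items; the key observation is that up to a multiplicative $(1+\varepsilon)$ loss, there are only quasi-polynomially many \emph{distinct} such threshold functions that matter, because rounding the sizes $\rsize_i$ and the demand $\rdemand$ to powers of $1+\varepsilon/\mathrm{poly}(n)$ collapses the exponentially many sets $A$ into few equivalence classes. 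This is where the factor $(1/\varepsilon)^{O(1)}$ enters, and it is also where the weakening of the knapsack cover inequalities (hence the ``$+\varepsilon$'' in the integrality gap rather than a clean $2$) comes from; one must check via the Carr \emph{et al.} rounding argument that the weakened inequalities still give integrality gap $2+\varepsilon$.

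The second and technically central stage is to build, for each of these surviving weighted threshold functions $f$, a small LP that enforces $\sum_{i \notin A}\rsize_i x_i \geqslant \rdemand$ for the corresponding $A$ --- equivalently, that the point $x$ restricted to the relevant coordinates lies in the up-closure of $f^{-1}(1)$. Here I would invoke the connection between extended formulations and \emph{monotone circuit depth}: by a Karchmer--Wigderson-type argument, a monotone circuit of depth $d$ and fan-in $2$ computing $f$ yields a communication protocol for the monotone KW game of $f$, and such a protocol of cost $c$ can be compiled into an extended formulation of the ``up-set'' polytope $\conv(f^{-1}(1)) + \R^n_{\geqslant 0}$ of size $2^{O(c)}$ (roughly one variable per protocol leaf / node, with consistency constraints along the protocol tree). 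Since Beimel and Weinreb construct $O(\log^2 n)$-depth, fan-in-$2$ monotone circuits for arbitrary weighted threshold functions on $n$ inputs, each such $f$ gets an extended formulation of size $2^{O(\log^2 n)} = n^{O(\log n)}$. Taking the natural common refinement (intersection, in extended space, after aligning variables) of these polytopes over all surviving $(A,\rdemand)$-classes multiplies the count by $(1/\varepsilon)^{O(1)}$ and keeps each factor at $n^{O(\log n)}$, giving the claimed total size $(1/\varepsilon)^{O(1)} n^{O(\log n)}$.

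To finish, one argues correctness: any $x \in [0,1]^n$ feasible for the big LP satisfies all the (weakened) knapsack cover inequalities, because the corresponding up-set polytope is exactly cut out by the projection of its extended formulation; conversely every integral feasible knapsack solution lifts to a feasible point of the extended formulation (its indicator vector is in each up-set, and the protocol-based lift of a $1$-input is explicit), so the LP is a genuine relaxation. Combining this with the Carr \emph{et al.} $2$-approximate rounding applied to the weakened inequalities bounds the integrality gap by $2+\varepsilon$.

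The main obstacle, I expect, is the compilation step in the second stage: turning a low-depth monotone circuit (or its KW protocol) into an extended formulation of the up-set polytope with size singly-exponential in the depth, while making sure the formulation is \emph{exact} (projects precisely onto the up-closure of $f^{-1}(1)$, no more and no less) and that the many per-class formulations can be glued together consistently sharing the original $x$-variables. One has to handle the subtlety that the circuit reads Boolean inputs whereas our $x$ is fractional --- the protocol/formulation must be set up so that it certifies membership in the \emph{up-set} of the $1$-inputs, which is the geometric object corresponding to a covering constraint, and one must verify that the depth-to-size blowup is $2^{O(\mathrm{depth})}$ and not worse (e.g.\ $2^{O(\mathrm{depth}\cdot\log n)}$), since that is exactly what keeps the bound at $n^{O(\log n)}$ rather than something larger.
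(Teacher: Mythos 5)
There is a genuine gap, and it is in your first stage. The knapsack cover inequality~(\ref{eq:KC}) associated with a set $A$ has its left-hand side supported on $[n]\setminus A$: two different sets $A\neq A'$ with exactly the same residual demand $\rdemand$ (and hence the same large-item set $\largeitems$ and the same truncated coefficients $\rsize_i$) still give incomparable inequalities, because they omit different items from the sum. Rounding $\rdemand$ and the coefficients to powers of $1+\varepsilon/\mathrm{poly}(n)$ therefore does \emph{not} collapse the exponentially many inequalities into quasi-polynomially many equivalence classes, and nothing in your argument bounds the number of constraints you must impose; this is exactly why a small subfamily in the original variables is not known and why the paper works with an extended formulation at all. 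The paper never prunes the family: it keeps all exponentially many inequalities, weakened by replacing the right-hand side with $\frac{2}{2+\varepsilon}\rdemand$, forms the slack matrix $S^\varepsilon$ of the polyhedral pair $(P,Q^\varepsilon)$, and bounds its nonnegative rank by exhibiting an $O(\log^2 n+\log(1/\varepsilon))$-complexity randomized protocol with nonnegative outputs that computes $S^\varepsilon_{a,b}$ in expectation (Lemma~\ref{lem:mainlemma}, via the Faenza et al.\ equivalence and Yannakakis' factorization theorem for pairs). The discretization there serves only to let Alice transmit approximate values of $\rdemand$ and $s(\smallitems\cap A)$ with $O(\log(1/\varepsilon)+\log n)$ bits; it is not a reduction in the number of inequalities.

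Your second stage also misuses the circuit connection. A depth-$t$ monotone circuit, via Karchmer--Wigderson, gives a partition of $f^{-1}(0)\times f^{-1}(1)$ into $2^{t}$ rectangles, each with an index $i^*$ such that $a_{i^*}=0$, $b_{i^*}=1$; this factorizes the slack of the \emph{uncapacitated} inequalities $\sum_{i:a_i=0}x_i\geqslant 1$, not an exact extended formulation of $\conv(f^{-1}(1))+\R^n_{\geqslant 0}$ (exactness is neither delivered by this construction nor needed, and the uncapacitated inequalities alone do not give a bounded gap). The central technical obstacle --- which your proposal never addresses --- is that for the capacitated inequalities the rectangle trick only works when $i^*$ is a \emph{large} item, since then $\rsize_{i^*}=\rdemand$ cancels the right-hand side; when $i^*$ is small the decomposition in~(\ref{eq:slack_decomp_basic}) has no analogue. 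The paper's protocol circumvents this by having Alice send the discretized quantities $\apxrdemand$, $\widetilde{\Delta}$ and $\apxsmallcontrib$, playing the KW game on a \emph{truncation} of $f$ (large items only, threshold $\demand-\widetilde{\Delta}$) so that the index found is guaranteed to be large, and, in the complementary case where Bob's large items fall short, rewriting the slack as a sum of terms that are provably nonnegative thanks to the discretization and are each computable by one player with $O(\log n)$ bits of sampling. Without an argument replacing both the (false) counting step and this small-$i^*$ case analysis, the proposal does not establish the theorem.
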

As the result is obtained by giving quasi-polynomially many inequalities  of
roughly the same strength as the exponentially many knapsack cover inequalities, our
techniques also lead to relaxations of quasi-polynomial size for the numerous
applications of these inequalities. We mention some of these applications below
when we discuss related works.

Beyond the result itself, the novelty of our approach lies in the concepts we
rely on and the techniques we develop. Our starting point is a connection
between monotone circuits and extended formulations that we explain below. This
connection was instrumental in the recent \emph{lower bounds} of G\"o\"os, Jain
and Watson on the extension complexity of independent set
polytopes~\cite{GJW16}, and can be traced back to a paper of
Hrube\v{s}~\cite{Hrubes12}. Here we use it for the first time to prove an
\emph{upper bound}. 

\paragraph{From Monotone Circuits to Extended Formulations.} Each choice of item sizes and demand gives rise to a \emph{weighted threshold function} $f : \{0,1\}^n \to \{0,1\}$ defined as
\begin{equation}
\label{eq:wtf}
f(x) := 
\begin{cases}
1 &\text{if } \sum_{i=1}^n \size_i x_i \geqslant \demand\\
0 &\text{otherwise}.
\end{cases}
\end{equation}
Since we assume that the item sizes and demand are nonnegative, $f$ is \emph{monotone} in the sense that $a \leqslant b$ implies $f(a) \leqslant f(b)$, for all $a, b \in \{0,1\}^n$. 

Clearly, we have that $x \in \{0,1\}^n$ is feasible if and only if $x \in f^{-1}(1)$. Furthermore, for $a \in f^{-1}(0)$, we can rewrite the uncapacitated knapsack cover inequalities as $\sum_{i : a_i = 0} x_i \geqslant 1$. Consider the \emph{slack matrix} $S_{a,b} := \sum_{i : a_i = 0} b_i - 1$ indexed by pairs $(a,b) \in f^{-1}(0) \times f^{-1}(1)$. By Yannakakis' factorization theorem~\cite{Yannakakis91}, the existence of a size-$r$ LP relaxation of min-knapsack that is at least as strong as that given by the uncapacitated knapsack cover inequalities is equivalent to the existence of a decomposition of the slack matrix $S$ as a sum of $r$ nonnegative rank-$1$ matrices.

Now suppose that there exists a depth-$\depth$ monotone circuit (that is, using only AND gates and OR gates) of fan-in $2$ for computing $f(x)$. A result of Karchmer and Wigderson~\cite{KM90} then implies a partition of the entries of $S$ into at most $2^{\depth}$ rectangles\footnote{A \emph{rectangle} is the Cartesian product of a set of row indices and a set of column indices.} $R \subseteq f^{-1}(0) \times f^{-1}(1)$ such that in each one of these rectangles $R$, there exists some index $i^* = i^*(R)$ such that $a_{i^*} = 0$ and $b_{i^*} = 1$ for all $(a,b) \in R$. Then we may write, for $(a,b) \in R$,
\begin{equation}
\label{eq:slack_decomp_basic}
S_{a,b} = \sum_{i : a_i = 0} b_i - 1
= \sum_{i : a_i = 0,\ i \neq i^*} b_i
= \sum_{i \neq i^*} (1-a_i) b_i
\end{equation}
so that $S$ restricted to the entries of $R$ can be expressed as a sum of at most $n - 1$ nonnegative rank-$1$ matrices of the form $\left((1-a_i) b_i\right)_{(a,b) \in R}$, where $i$ is a fixed index distinct from $i^*$. This implies a decomposition of the whole slack matrix $S$ as a sum of at most $2^{\depth} (n-1)$ nonnegative rank-$1$ matrices, and thus the existence  of a $2^{\depth}(n-1)$-size LP relaxation of min-knapsack that captures the uncapacitated knapsack cover inequalities. Since $f$ is a weighted threshold function, we can take $\depth = O(\log^2 n)$, as proved by Beimel and Weinreb~\cite{BW06}. Therefore, we obtain a $n^{O(\log n)}$-size extended formulation for the uncapacitated knapsack cover inequalities. Unfortunately, these inequalities do not suffice to guarantee a bounded integrality gap.

For the full-fledged knapsack cover inequalities~(\ref{eq:KC}), the simple idea described above breaks down. If the special index $i^* = i^*(R)$ for some rectangle $R$ corresponds to a large object, we can write
$$
\sum_{i : a_i = 0} \rsize_i b_i - \rdemand
= \sum_{i : a_i = 0,\ i \neq i^*} \rsize_i b_i
= \sum_{i \neq i^*} \rsize_i (1-a_i) b_i
$$
where each matrix $\left(\rsize_i (1-a_i) b_i\right)_{(a,b) \in R}$ has rank at most $1$ because $\rsize_i (1-a_i)$ depends on $a$ only. However, $i^*$ may correspond to a small object, in which case we cannot decompose the slack matrix as above.

Nevertheless, we prove that it is possible to overcome this difficulty. Two key ideas we use to achieve this are to discretize some of the quantities (which explains why we lose an $\varepsilon$ in the integrality gap) and resort to several weighted threshold functions instead of just one. If all these functions admit $O(\log n)$-depth monotone circuits of fan-in $2$, then we obtain a size-$n^{O(1)}$ LP relaxation.

\paragraph{Related Works.} Knapsack cover inequalities and their
generalizations such as flow cover inequalities were used as a systematic way
to \emph{strengthen} LP formulations of other (seemingly unrelated)
problems~\cite{CFLP00,CS08,LLS08,BBN08,BGK10,CGK10,BP10,CS11,Efs15}. By
strengthening we mean that one would start with a polynomial size LP
formulation with a potentially unbounded integrality gap for some problem of
interest, and then show that adding (adaptations) of knapsack cover inequalities 
reduces this integrality gap (we illustrate in Section~\ref{sec:FCI} how this
strengthening works for the Single Demand Facility Location problem, reducing
the integrality gap down to 2). However, similar to the case of min-knapsack
discussed above, the drawback of this approach is that the size of the
resulting LP formulation becomes exponential. We can extend our result to show
that it yields quasi-polynomial size LP formulation for many such applications.
To name a few: 

\begin{itemize}
\item Carr \emph{et al.}~\cite{CFLP00} applied these inequalities to the
  Generalized Vertex Cover problem, Multi-color Network Design problem and the
  Fixed Charge Flow problem, and showed how these inequalities reduce the
  integrality gap of the starting LP formulations.

\item Bansal and Pruhs~\cite{BP10} studied the Generalized Scheduling Problem
  (GSP) that captures many interesting scheduling problems such as Weighted
  Flow Time, Flow Time Squared and Weighted Tardiness. In particular, they
  showed a connection between GSP and a certain geometric covering problem, and
  designed an LP based approximation algorithm for the later that yields an
  approximate solution for the GSP. The LP formulation that they use for the
  intermediate geometric cover problem is strengthened using knapsack cover
  inequalities, and yields an $O(\log \log nP)$-approximation for the GSW where
  $n$ is the number of jobs, and $P$ is the maximum job size. In the special
  case of identical release time of the jobs, their LP formulation yields
  a $16$-approximation algorithm. This constant factor approximation was later
  improved by Cheung and Shmoys~\cite{CS11} and Mestre and Verschae~\cite{MV14} to
  a $(4+\varepsilon)$-approximation,
  where the authors added the knapsack cover inequalities directly to the LP
  formulation of the scheduling problem, i.e., without resorting to the
  intermediate geometric cover problem as in~\cite{BP10}. For both the
  GSP and its special case, our method yields an LP formulation whose size is
quasi-polynomial in $n$, and polynomial in both $\log P$ and $\log W$, where
$W$ is the maximum increase in the cost function of a job at any point in
time.

\item Efsandiari \emph{et al.}~\cite{Efs15} used a knapsack-cover-strengthened
  LP formulation to design an $O(\log k)$-approxima\-tion algorithm for
  Precedence-Constrained Single-Machine Deadline scheduling problem, where $k$
  is the number of distinct deadlines. 

\item Carnes and Shmoys~\cite{CS08} designed primal-dual algorithms for the
  Single-Demand Facility Location, where the primal LP formulation is
  strengthened by adding (generalizations) of knapsack cover inequalities.
\end{itemize}

Extended formulations have received a considerable amount of attention recently, mostly for proving impossibility results. Pokutta and Van Vyve~\cite{VP2013} proved a worst-case $2^{\Omega(\sqrt{n})}$ size lower bound for extended formulations of the max-knapsack polytope, which directly implies a similar result for the min-knapsack polytope. Other recent works include~\cite{extform4,bfps2012,CLRS13,Rothvoss14,LRS15,BFPS2015}.

\paragraph{Outline.} We prove our main result in Section~\ref{sec:protocol}, after giving preliminaries in Section~\ref{sec:prelim}. Instead of explicitly constructing our extended formulation, we provide a nonnegative factorization of the appropriate slack matrix. For this, we use the language of communication complexity --- we give an $O(\log^2 n + \log (1/\varepsilon))$-complexity two-party communication protocol with private randomness and nonnegative outputs whose expected output is the slack of a given feasible solution with respect to a given (weakened) knapsack cover inequality.

Next, in Section~\ref{sec:FCI}, we extend our communication protocol to the flow cover inequalities for the Single-Demand Facility Location problem, and show how to approximate the exponentially many flow cover inequalities using a smaller LP formulation.

Finally, in Section~\ref{sec:cutting_plane}, we show that although we do not know how to write down our extended formulation for min-knapsack in quasi-polynomial time, we can at least compute a $(2+\varepsilon)$-approximation of the optimum from the extended formulation in quasi-polynomial time, given any cost vector, \emph{without} relying on the ellipsoid algorithm. This is done via a new cutting-plane algorithm that might be of independent interest.

\section{Preliminaries.} \label{sec:prelim}
In this section, we introduce some key notions related to our problem.  We
review extended formulations and extension complexity of pairs of polyhedra in
Section~\ref{sec:sub:polytope}. Next, we define randomized communication
protocols with non-negative outputs that compute entries of matrices in
expectation.  Finally, in Section~\ref{sec:sub:circuits}, we review some
constructions of low-depth monotone circuits, and the Karchmer-Wigderson game
that relates circuit complexity and communication complexity. 

\subsection{Polyhedral Pairs, Extended Formulations and Slack Matrices.}
\label{sec:sub:polytope}

Let $P \subseteq \R^n$ be a polytope and $Q \subseteq \R^n$ be a polyhedron containing $P$. The complexity of the \emph{polyhedral pair} $(P,Q)$ can be measured by its extension complexity, which roughly measures how compactly we can represent a relaxation of $P$ contained in $Q$. The formal definition is as follows.

\begin{Definition}
Given a polyhedral pair $(P,Q)$ where $P \subseteq Q \subseteq \mathbb{R}^n$, we say that a system $E^{\leqslant} x + F^{\leqslant} y \leqslant g^{\leqslant}$, $E^{=}x + F^{=}y = g^{=}$ in $\R^{n+k}$ is an \emph{extended formulation} of $(P,Q)$ if the polyhedron $R := \{x \in \mathbb{R}^n \mid \exists y \in \R^k : E^{\leqslant} x + F^{\leqslant} y \leqslant g^{\leqslant}$, $E^{=}x + F^{=}y = g^{=}\}$ contains $P$ and is contained in $Q$. The \emph{size} of the extended formulation is the number of inequalities in the system. The \emph{extension complexity} of $(P,Q)$, denoted by $\xc(P,Q)$, is the minimum size of an extended formulation of $(P,Q)$. 
\end{Definition}

Although the case $P = Q$ is probably the most frequent, we will need polyhedral pairs here. In a seminal paper, Yannakakis~\cite{Yannakakis91} showed that one can study the extension complexity of a polytope $P$ through the non-negative rank of a matrix associated with $P$, namely, its slack matrix.

\begin{Definition}
Let $(P,Q)$ be a polyhedral pair with $P \subseteq Q \subseteq \R^n$. Let $P = \conv(\{v_1,\dots,v_p\})$ be an inner description of $P$ and $Q = \{x \in \R^n \mid Ax \geqslant b\}$ be an outer description of $Q$, where $A \in \R^{m \times n}$ and $b \in \R^m$. We now define the slack matrix $S$ of the pair $(P,Q)$ with respect to the given representations of $P$ and $Q$. The $i$th row of $S$ corresponds to the constraint $A_i x \geqslant b_i$, while the $j$th column of $S$ corresponds to the point $v_j$. The value $S_{i,j}$ measures how close the constraint $A_i x \geqslant b_i$ is to being tight for point $v_j$. More specifically, the \emph{slack matrix} $S \in \mathbb{R}^{m \times p}_{\geqslant 0}$ is defined as $S_{i,j} := A_i v_j - b_i$ for all $i \in [m], j \in [p]$. \end{Definition}

Note that the slack matrix is not unique as it depends on the choices of points $v_1, \ldots, v_p$ and linear description $Ax \geqslant b$. 

\begin{Definition}
Given a non-negative matrix $M \in \R_{\geqslant 0}^{m \times n}$, we say that a pair
of matrices $T, U$ is a \emph{rank-$r$ non-negative factorization} of $M$ if $T
\in \R_{\geqslant 0}^{m \times r}$, $U \in \R_{\geqslant 0}^{r \times n}$, and
$M = TU$. We define the \emph{non-negative rank} of $M$ as $\nnegrk(M) :=
\min\{ r: M \textrm{ has a rank-$r$ non-negative factorization}\}$. Notice that
a non-negative factorization of $M$ of rank at most $r$ is equivalent to
a decomposition of $M$ as a sum of at most $r$ non-negative rank-$1$ matrices.
\end{Definition}

Yannakakis~\cite{Yannakakis91} proved that for a polytope $P$ of dimension at least~$1$ and any of its slack matrices $S$, the extension complexity of $P$ is equal to the non-negative rank of $S$. Namely, $\xc(P) = \nnegrk(S)$. In particular, all the slack matrices of $P$ have the same nonnegative rank. 

This \emph{factorization theorem} can be extended to polyhedral pairs: we have $\xc(P,Q) \in \{\nnegrk(S),\nnegrk(S)-1\}$ whenever $S$ is a slack matrix of $(P,Q)$, see e.g.~\cite{bfps2012}.

\subsection{Randomized Communication Protocols.}
\label{sec:sub:commprotocol}

We now define a certain two-party communication problem and relate it to the non-negative rank discussed earlier,
following the framework in Faenza, Fiorini, Grappe and Tiwary~\cite{FaenzaFioriniGrappeTiwary11}.

\begin{Definition}\label{def:protocol}
  Let $S \in \R_{\geqslant 0}^{\mathcal{A} \times \mathcal{B}}$ be a non-negative matrix whose
  rows and columns are indexed by $\mathcal{A}$ and $\mathcal{B}$, respectively.
  Let $\Pi$ be a communication protocol with private randomness between two players Alice and Bob.
  Alice gets an input $a \in \mathcal{A}$ and Bob gets an input $b \in \mathcal{B}$. They exchange bits in a pre-specified way according to $\Pi$, and at the end either one of the players outputs some \emph{non-negative} number $\xi \in \R_{\geqslant 0}$. We say that $\Pi$ 
  \emph{computes $S$ in expectation} if for every $a$ and $b$, the expectation of the output $\xi$
  equals $S_{a,b}$.

  The \emph{communication complexity of a protocol $\Pi$} is the maximum of the number of bits exchanged \emph{between} Alice and Bob, over all pairs $(a,b) \in \mathcal{A} \times \mathcal{B}$ and the private randomness of the players. The size of the final output does not count towards the communication complexity of a protocol. The \emph{communication complexity of $S$}, denoted $\CCexp(S)$ is the minimum communication complexity of a randomized protocol $\Pi$ computing $S$ in expectation.
\end{Definition}

Faenza \emph{et al.}~\cite{FaenzaFioriniGrappeTiwary11} relate the non-negative rank of a non-negative matrix $S$, to the communication complexity $\CCexp(S)$. In particular, they prove that if $\nnegrk(S) \neq 0$, then $\CCexp(S) = \log_2 \nnegrk(S) + \Theta(1)$. Combining this with the factorization theorem, we get $\CCexp(S) = \log_2 \xc(P,Q) + \Theta(1)$ whenever $(P,Q)$ is a polyhedral pair with slack matrix $S$, provided that $\xc(P,Q) \neq 0$.

\subsection{Weighted Threshold Functions and Karchmer-Widgerson Game.}
\label{sec:sub:circuits}

An important part of our protocol depends on the communication complexity 
of (monotone) weighted threshold functions.
We start with the following result from~\cite{BW05,BW06} which gives low-depth
circuits for such functions. Another construction was given in~\cite{COS15}.
The circuits as stated in~\cite{BW05,BW06,COS15} have logarithmic depth, polynomial size
and unbounded fan-in, thus  it
is straightforward to convert them into circuits with fan-in $2$ with a logarithmic increase in depth.
Below we state the result for circuits of fan-in $2$ as will be used later.
Recall that a circuit is \emph{monotone} if it uses only AND and OR gates, but no NOT gates.

\begin{theorem}[\cite{BW05,BW06}]\label{thm:monotone-threshold-circuit}
  Let $w_1,\ldots,w_n \in \Z_{>0}$ be positive weights, and $T \in \Z_{\geqslant 0}$ be a threshold.
  Let $f: \{0,1\}^n \to \{0,1\}$ be the monotone function such that $f(x_1,\ldots,x_n)=1$ if and only if $\sum_{i=1}^{n} w_i x_i \geqslant T$.
  Then there is a depth-$O(\log^2 n)$ monotone circuit of fan-in $2$ that computes the function $f$.
\end{theorem}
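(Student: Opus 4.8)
The plan is to obtain the theorem from the published Beimel--Weinreb construction via a routine fan-in reduction. First I would recall the unbounded-fan-in form of their result: for any positive integer weights $w_1, \dots, w_n$ and any threshold $T$, the monotone weighted threshold function $f$ is computed by a monotone circuit $C$ over the inputs $x_1, \dots, x_n$ that uses \emph{unbounded}-fan-in AND and OR gates, has depth $O(\log n)$, and has size $n^{O(1)}$. The degenerate instances where $f$ is a constant function (e.g.\ $T = 0$, or $T$ exceeding $\sum_i w_i$) are trivial, so we may assume $C$ is a genuine circuit on the $n$ variables.

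Next I would perform the fan-in reduction. Since $C$ has size $s = n^{O(1)}$, every gate of $C$ has fan-in at most $s$. Replace each AND gate of fan-in $k \leqslant s$ by a balanced binary tree of $k-1$ fan-in-$2$ AND gates, and likewise each OR gate by a balanced binary tree of fan-in-$2$ OR gates. By associativity of AND and of OR, each tree computes the same Boolean function as the gate it replaces; the resulting circuit uses only AND and OR gates, so it remains monotone; and each tree has depth $\lceil \log_2 k \rceil = O(\log s) = O(\log n)$. Hence the new fan-in-$2$ monotone circuit still computes $f$ and has depth at most $(\text{depth of } C) \cdot O(\log n) = O(\log n) \cdot O(\log n) = O(\log^2 n)$, which is the desired bound. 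Its size increases by at most a factor $s$, but only the depth bound is needed here.

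Essentially all the content sits in the Beimel--Weinreb construction, which I would invoke as a black box; the reduction above is standard. The only points to verify are that the cited circuits genuinely have depth $O(\log n)$ (as opposed to $O(\log^2 n)$) in the unbounded-fan-in model, and that their size is polynomial in $n$, so that the per-gate fan-in --- and therefore the depth blow-up incurred by the reduction --- is $O(\log n)$ uniformly over all gates. No real obstacle arises here; the genuinely hard work is the one already carried out in~\cite{BW05,BW06}, where the weights are split recursively and partial sums are balanced so as to keep both depth and size small, and reproving that from scratch is well beyond what this statement requires.
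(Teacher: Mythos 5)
Your proposal is correct and matches the paper's treatment: the paper likewise invokes the Beimel--Weinreb circuits as unbounded-fan-in, depth-$O(\log n)$, polynomial-size monotone circuits and notes that converting to fan-in $2$ by replacing each gate with a balanced binary tree costs a multiplicative $O(\log n)$ in depth, giving $O(\log^2 n)$. No substantive difference from the paper's argument.
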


The well-known \emph{Karchmer-Wigderson game} \cite{KM90} connects the depth of monotone circuits
and communication complexity. Given a monotone function $f:\{0,1\}^n \to \{0,1\}$,
the \emph{monotone Karchmer-Wigderson game} is the following: Alice receives $a \in f^{-1}(0)$, Bob receives $b \in f^{-1}(1)$, they communicate bits to each other, and the goal is to agree on a position $i \in \{1,\ldots,n\}$ such that $a_i = 0$ and $b_i = 1$.
Let $\CCKW(f)$ be the deterministic communication complexity of this game.

\begin{theorem}[\cite{KM90}]\label{thm:karchmer-wigderson}
  Let $f:\{0,1\}^n \to \{0,1\}$ be a monotone function, $\CCKW(f)$ be the deterministic communication complexity of the Karchmer-Wigerson game, and $\mathrm{depth}(f)$ be the minimum depth of a fan-in $2$
  monotone circuit that computes $f$. Then $\mathrm{depth}(f)=\CCKW(f)$.
\end{theorem}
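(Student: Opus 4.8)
The plan is to establish the two inequalities $\CCKW(f) \leqslant \mathrm{depth}(f)$ and $\mathrm{depth}(f) \leqslant \CCKW(f)$ separately, in each case by traversing a tree --- the circuit tree for the first inequality, the protocol tree for the second.

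For $\CCKW(f) \leqslant \mathrm{depth}(f)$, fix a monotone fan-in-$2$ circuit $C$ of depth $d = \mathrm{depth}(f)$ computing $f$; since both $f^{-1}(0)$ and $f^{-1}(1)$ are nonempty, we may assume $C$ uses no constant gates. Alice gets $a \in f^{-1}(0)$ and Bob gets $b \in f^{-1}(1)$. They walk down $C$ from the output gate maintaining the invariant that the current gate $g$ satisfies $g(a) = 0$ and $g(b) = 1$; this holds at the output gate because $C$ computes $f$. If $g = g_1 \wedge g_2$, then $g_1(b) = g_2(b) = 1$ while $g_j(a) = 0$ for some $j$, and Alice sends one bit naming such a $j$; if $g = g_1 \vee g_2$, then symmetrically $g_1(a) = g_2(a) = 0$ while $g_j(b) = 1$ for some $j$, and Bob sends one bit naming such a $j$. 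In either case the invariant is preserved at the chosen child. After at most $d$ bits the players reach an input leaf $x_i$, and the invariant gives $a_i = 0$ and $b_i = 1$, so they output $i$. Hence $\CCKW(f) \leqslant d$.

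For $\mathrm{depth}(f) \leqslant \CCKW(f)$, fix a deterministic protocol $\Pi$ solving the game with communication complexity $c = \CCKW(f)$. I would prove, by induction on $k$, the stronger statement: for every combinatorial rectangle $X \times Y \subseteq f^{-1}(0) \times f^{-1}(1)$ on which some sub-protocol of $\Pi$ uses at most $k$ bits and always outputs a coordinate $i$ with $a_i = 0$ for all $a \in X$ and $b_i = 1$ for all $b \in Y$, there is a monotone circuit of depth at most $k$ outputting $0$ on all of $X$ and $1$ on all of $Y$. The base case $k = 0$ uses the leaf $x_i$ for the fixed output coordinate $i$. For the inductive step, suppose the next bit is Alice's; it partitions $X = X_0 \cup X_1$, leaving subrectangles $X_0 \times Y$ and $X_1 \times Y$ handled with at most $k-1$ bits. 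The induction hypothesis yields monotone circuits $C_0, C_1$ of depth at most $k-1$, and I set $C := C_0 \wedge C_1$: then $C(b) = 1$ for $b \in Y$ since $C_0(b) = C_1(b) = 1$, and $C(a) = 0$ for $a \in X$ since $a$ lies in some $X_j$ and hence $C_j(a) = 0$; the depth is at most $k$. If instead the next bit is Bob's, the same argument with $Y = Y_0 \cup Y_1$ and $C := C_0 \vee C_1$ works. Applying the statement with $k = c$ to $X = f^{-1}(0)$ and $Y = f^{-1}(1)$ produces a depth-$c$ monotone circuit computing $f$, so $\mathrm{depth}(f) \leqslant c$.

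The one delicate point is the pairing in the inductive step: the circuit must use an AND gate exactly when the player holding the $0$-input speaks and an OR gate when the player holding the $1$-input speaks. This asymmetry is precisely what yields a \emph{monotone} circuit; reversing it produces a circuit that is correct on one side of the rectangle but not the other.
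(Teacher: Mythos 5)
Your proof is correct: both directions are the standard Karchmer--Wigderson argument (walking down the circuit with the invariant $g(a)=0$, $g(b)=1$ for one inequality, and the rectangle/induction on the protocol tree with AND at Alice's nodes and OR at Bob's for the other), which is exactly the argument of the cited source. The paper itself states this theorem with a citation to [KM90] and gives no proof, so there is nothing in the paper to diverge from; your write-up, including the remark on why the AND/OR pairing must follow which player speaks, matches the classical proof.
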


Combining Theorems~\ref{thm:monotone-threshold-circuit} and \ref{thm:karchmer-wigderson}, we immediately get that $\CCKW(f) = O(\log^2 n)$ for every weighted threshold function $f$ on $n$ inputs.

\section{Small LP relaxation for Min-Knapsack.} \label{sec:protocol}

In this section, we show the existence of a $(1/\varepsilon)^{O(1)}n^{O(\log
n)}$-size LP relaxation of min-knapsack with integrality gap $2 + \varepsilon$,
proving Theorem~\ref{thm:main}. First, we give a high-level overview of the
construction in Section~\ref{sec:sub:overview}.  The actual protocol is
described and analyzed in Section~\ref{sec:sub:protocol}.

\subsection{Overview.}
\label{sec:sub:overview}

Consider the slack matrix $S$ that has one row for each knapsack cover inequality
and one column for each feasible solution of min-knapsack. More precisely, let 
$f : \{0,1\}^n \to \{0,1\}$ denote the weighted threshold function defined by 
the item sizes 
$\size_i$ ($i \in [n]$) and demand $D$ as in~(\ref{eq:wtf}). The rows and columns
of $S$ are indexed by $a \in f^{-1}(0)$ and $b \in f^{-1}(1)$ respectively.
The entries of $S$ are given by
$$
S_{a,b} := \sum_{i : a_i = 0} \rsize_i b_i - \rdemand\,,
$$
where as precedingly $U = U(a) := D - \sum_{i : a_i = 1} s_i$, and $\rsize_i = \rsize_i(a) = \min\{s_i, U\}$. Geometrically, $S$ is the slack matrix of the
polyhedral pair $(P,Q)$ in which $P$ is the min-knapsack polytope and $Q$ is 
the (unbounded) polyhedron defined by the knapsack cover inequalities.

Ideally, we would like to design a communication protocol for $S$, 
as those discussed in Section~\ref{sec:sub:commprotocol}, with low communication complexity.
This would imply a low-rank non-negative factorization of $S$. From the
factorization theorem of Section~\ref{sec:sub:polytope}, it would follow
that there exists a small-size extended formulation yielding a polyhedron 
$R$ containing the min-knapsack polytope $P$ and contained in the 
knapsack-cover relaxation $Q$. Hence, we would get a small-size LP 
relaxation for min-knapsack that implies the exponentially many knapsack 
cover inequalities, and thus have integrality gap at most $2$. 

However, due to the fact that the quantities involved can be 
exponential in $n$, making them too expensive to communicate 
directly, we have to settle for showing the existence of 
small-size extended formulation that \emph{approximately} 
implies the knapsack cover inequalities. Before discussing 
further these complications, we give an idealized version of
the protocol to help with the intuition. Assume for now that all item 
sizes and the demand are polynomial in $n$. Thus Alice and Bob 
can communicate them with $O(\log n)$ bits. 

The goal of the two players is to compute the slack 
$S_{a,b} = \sum_{i : a_i = 0} \rsize_i b_i - \rdemand$, 
when Alice is given an infeasible $a \in \{0,1\}^n$ and Bob 
is given a feasible $b \in \{0,1\}^n$. That is, after several 
rounds of communication, either one of them outputs some non-negative value 
$\xi$, such that the expectation of $\xi$ equals $S_{a,b}$. 

We define for a set of items $J \subseteq [n]$ 
the quantity 
$\size(J) := \sum_{j \in J} \size_j$, and
$\rsize(J) := \sum_{j \in J} \rsize_j$. Let $A$ and $B$ be the
subsets of $[n]$ corresponding to Alice's input $a$ and Bob's 
input $b$,  respectively. The slack we want to compute thus becomes 
$\rsize(B \smallsetminus A) - \rdemand$.

At the beginning, Alice computes the residual demand $\rdemand$ 
and sends it to Bob. Now observe that if there is some $i^* \in B 
\smallsetminus A$, such that $\size_{i^*} \geqslant \rdemand$, then 
we have $\rsize_{i^*} = \rdemand$, and we can easily write the slack as 
$\rsize(B \smallsetminus A \smallsetminus \{i^*\}) + (\rsize_{i^*}-U) 
= \rsize(B \smallsetminus A \smallsetminus \{i^*\})$ (similarly to the uncapacitated case discussed in the introduction).
Recall that we call an item $i$ large if $s_i \geqslant \rdemand$ 
and small otherwise. Let $\largeitems$ be the set of large 
items and $\smallitems$ be the set of small items. 

The rest of the protocol is divided into two cases as follows, depending on whether 
Alice and Bob can easily find a large item $i^* \in B \smallsetminus A$. 
To this end, Alice sends $\size(\largeitems \cap A)$ to Bob. Note 
that now Bob can compute $\size(\smallitems \cap A) = \demand
- \rdemand - \size(\largeitems \cap A)$. Bob computes the contribution 
of large items in $B$, that is, $\size(\largeitems \cap B)$. 

If $\size(\largeitems \cap B) > \size(\largeitems \cap A)$, 
then we are guaranteed that there is some $i^* \in \largeitems \cap (B \smallsetminus A)$.
Moreover, defining the threshold function 
\begin{equation}
\label{eq:truncation}
g(x) := 
\begin{cases}
1 &\text{if } \sum_{i \in \largeitems} \size_i x_i \geqslant \size(\largeitems \cap B),\\
0 &\text{otherwise},
\end{cases}
\end{equation}
then $g(a)=0$ and $g(b)=1$. Hence, Alice and Bob can find 
such an item with $O(\log^2 n)$ bits of communication, see 
Section~\ref{sec:sub:circuits}. With that, it is not hard to 
compute $\rsize(B \smallsetminus A \smallsetminus \{i^*\})$ 
with $O(\log n)$ bits of communication: Alice samples a 
uniformly random item $i$ and sends the index to Bob, Bob 
replies with $b_i$, Alice outputs $\rsize_i \cdot n$ if 
$b_i=1$, $i \ne i^*$ and $i \notin A$, and outputs $0$ 
otherwise. All her outputs are non-negative and their 
expectation is exactly the slack.

In the other case, $\size(\largeitems \cap B) \leqslant 
\size(\largeitems \cap A)$. Note that $\size(B) 
= \size(\largeitems \cap B) + \size(\smallitems \cap B) \geqslant \demand = \size(\largeitems \cap A) + \size(\smallitems \cap A)+\rdemand$, 
thus $\size(\smallitems \cap B)-\size(\smallitems \cap A) - \rdemand \geqslant \size(\largeitems \cap A) - \size(\largeitems \cap B) \geqslant 0$.
We now write the slack as 
\begin{align*}
 \rsize(B \smallsetminus A) - U =~ & \rsize(\largeitems \cap (B \smallsetminus A)) + \size(\smallitems \cap (B \smallsetminus A)) - \rdemand \\
  =~ &\rsize(\largeitems \cap (B \smallsetminus A)) + \size(\smallitems \cap B) - \size(\smallitems \cap (A \cap B)) - \rdemand \\
  =~ &\rsize(\largeitems \cap (B \smallsetminus A)) + \size(\smallitems \cap B) - \size(\smallitems \cap A)  + \size(\smallitems \cap (A \smallsetminus B)) - \rdemand \\
  =~& \rsize(\largeitems \cap (B \smallsetminus A)) + \size(\smallitems \cap (A \smallsetminus B))   + \left(\size(\smallitems \cap B) - \size(\smallitems \cap A) - \rdemand\right)
  \,.
\end{align*}
Alice and Bob can compute the first and the last term in expectation 
using a protocol similar to that in the previous case. The term in the 
middle can be computed by Bob with all the information he has at this 
stage. To conclude, in both cases, Alice and Bob can compute the exact 
slack $S_{a,b}$ with $O(\log^2 n)$ bits of communication.

\subsection{The Protocol.} \label{sec:sub:protocol}

The actual slack matrix $S^{\varepsilon}$ we work with is defined as
\begin{equation}
  \label{eq:relaxedKCIslack}
S^\varepsilon_{a,b} := \sum_{i : a_i = 0} \rsize_i b_i - \frac{2}{2+\varepsilon} \rdemand\,,
\end{equation}
where $\varepsilon > 0$ is any small constant, $a \in f^{-1}(0)$ and $b \in
f^{-1}(1)$. $S^{\varepsilon}$ is the slack matrix of the polyhedral pair $(P,
Q^{\varepsilon})$ where $P$ is the min-knapsack polytope and $Q^{\varepsilon}$
is the polyhedron defined by a slight weakening of the knapsack cover
inequalities  obtained by replacing the right hand side of~(\ref{eq:KC}) by
$\frac{2}{2+\varepsilon} \rdemand < \rdemand$. For every $x \in \R^n_{\geqslant
0}$ that satisfies all weakened knapsack cover inequalities, we have that
$\frac{2+\varepsilon}{2} x$ satisfies all original knapsack cover inequalities,
and thus $(2+\varepsilon) x$ dominates a convex combination of feasible
solutions. Therefore the integrality gap of the resulting LP relaxation
(obtained from a non-negative factorization of $S^{\varepsilon}$) is at
most $2+\varepsilon$. 
 
In order to refer to the ``derived'' weighted threshold functions $g$ as 
in~(\ref{eq:truncation}), we need a last bit of terminology. We say that
$g : \{0,1\}^n \to \{0,1\}$ is a \emph{truncation} of $f$ if there exists
$U, T \in \Z_{>0}$ with $T \leqslant D$ such that $g(x) = 1$ iff 
$\sum_{i = 1}^n w_i x_i \geqslant T$, where $w_i = s_i$ if 
$s_i \geqslant U$ and $w_i = 0$ otherwise. We are now ready to state  our main technical lemma. 

\begin{Lemma}\label{lem:mainlemma}
  For all constants $\varepsilon \in (0,1)$, item sizes $\size_i \in \Z_{> 0}$ 
  ($i \in [n]$), all smaller than $2^{\lceil n \log n \rceil}$ and 
  demand $\demand \in \Z_{> 0}$ with $\max \{s_i \mid i \in [n]\} 
  \leqslant D \leqslant \sum_{i=1}^n s_i$, such that all truncations 
  of the corresponding weighted threshold function admit 
  depth-$\depth$ monotone circuits of fan-in~$2$, there 
  is a $O(\log(1/\varepsilon)+\log n+\depth)$-complexity randomized 
  communication protocol with non-negative outputs that computes the slack 
  matrix $S^\varepsilon$ in expectation. Since we may always take 
  $\depth = O(\log^2 n)$, this gives a 
  $O(\log(1/\varepsilon)+\log^2 n)$-complexity protocol, unconditionally.
\end{Lemma}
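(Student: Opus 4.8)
The plan is to construct the randomized protocol of Definition~\ref{def:protocol} directly. Alice holds an infeasible $a\in f^{-1}(0)$ and Bob a feasible $b\in f^{-1}(1)$; write $A,B$ for the corresponding subsets of $[n]$, put $\rdemand=\demand-\size(A)$, $\rsize_i=\min\{\size_i,\rdemand\}$ and $D':=\tfrac{2}{2+\varepsilon}\rdemand$, so the target output is $S^\varepsilon_{a,b}=\rsize(B\smallsetminus A)-D'$. The guiding principle is that every numerically large quantity stays on Alice's side: because she knows $a$ she knows $\rdemand$, every $\rsize_i$, and every partial sum over $A$ exactly, and she may output arbitrarily large numbers at no cost (output size does not count towards communication), but she may not transmit these $\Theta(n\log n)$-bit numbers. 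So the protocol begins with a constant number of $O(\log n+\log(1/\varepsilon))$-bit messages carrying \emph{roundings} of the few quantities that genuinely must be shared — a rounded residual demand $\apxrdemand$ (rounded up, so $\rdemand\le\apxrdemand\le(1+\delta)\rdemand$ with $\delta=\Theta(\varepsilon/n)$), a rounded small-items contribution $\apxsmallcontrib$, and a rounded threshold; sending the binary order of magnitude costs $O(\log n)$ bits and the leading $O(\log(1/\delta))$ bits supply the precision. Crucially these rounded values will steer only the \emph{combinatorial} shape of the remaining protocol (which case, which special coordinate), never the final number — which is what makes it possible to compute $S^\varepsilon$ \emph{exactly} in expectation despite having discretized.

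Next the protocol branches, as in the idealized argument of Section~\ref{sec:sub:overview}, according to whether $B\smallsetminus A$ provably contains an item with $\size_i\ge\apxrdemand$ (hence with $\rsize_i=\rdemand$, as $\apxrdemand\ge\rdemand$); this is decided from $\apxrdemand$, $\apxsmallcontrib$ and Bob's input. In the affirmative case Alice and Bob use the rounded data to set up a \emph{truncation} $g$ of $f$ — of the form $g(x)=1\iff\sum_{i\in\largeitems}\size_i x_i\ge T$ with $\largeitems=\{i:\size_i\ge\apxrdemand\}$ — whose threshold $T\le\demand$ is chosen so that $g(a)=0$ and $g(b)=1$. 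By hypothesis $g$ has a depth-$\depth$ monotone fan-in-$2$ circuit, so by Theorem~\ref{thm:karchmer-wigderson} (the Karchmer--Wigderson game for $g$) they agree in $\depth$ bits on a coordinate $i^*$ with $a_{i^*}=0$, $b_{i^*}=1$ and $i^*\in\largeitems$, so $\rsize_{i^*}=\rdemand\ge D'$. Then one peels it off: $S^\varepsilon_{a,b}=\rsize(B\smallsetminus A\smallsetminus\{i^*\})+(\rdemand-D')=\rsize(B\smallsetminus A\smallsetminus\{i^*\})+\tfrac{\varepsilon}{2+\varepsilon}\rdemand$, a sum of two nonnegative terms. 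Alice realises the first term in expectation with one $O(\log n)$-bit sampling round — pick $i\in[n]$ uniformly, learn $b_i$ from Bob, output $n\,\rsize_i$ if $b_i=1$, $i\notin A$ and $i\ne i^*$, else $0$ — and in a separate branch she simply outputs the constant $\tfrac{\varepsilon}{2+\varepsilon}\rdemand$, which she knows exactly.

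The complementary case — where the subtraction of $D'$ cannot be charged to a single coordinate — is the crux. Here one uses the algebraic identity $S^\varepsilon_{a,b}=\rsize(\largeitems\cap(B\smallsetminus A))+\size(\smallitems\cap(A\smallsetminus B))+\big(\size(\smallitems\cap B)-\size(\smallitems\cap A)-D'\big)$; Alice realises the first two terms in expectation by two more sampling rounds (she knows $\largeitems,\smallitems,A$ and all sizes), and the parenthesised middle term — arranged to be nonnegative by the case hypothesis — is delivered using $\apxrdemand,\apxsmallcontrib$ together with a correction branch supplied by Alice. The two intertwined difficulties here are precisely why the knapsack cover right-hand side had to be weakened from $\rdemand$ to $D'<\rdemand$. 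First, every leaf must output a \emph{nonnegative} number, so ``$-D'$'' has to be subtracted from a quantity provably at least $D'$; isolating a large $i^*$ (previous case) and funnelling the subtraction into the middle term (this case) are the only two mechanisms available, and the case split must be defined so that in every branch one of them applies with slack to spare. Second, replacing the exact $\rdemand$ by $\apxrdemand$ (and the exact partial sums by $\apxsmallcontrib$) misclassifies the ``borderline'' items with $\rdemand\le\size_i<\apxrdemand$ and perturbs the middle-term computation by up to $n\delta\rdemand=O(\varepsilon\rdemand)$; since Alice knows $\rdemand$, $\apxrdemand$ and the exact borderline set, she cancels each such perturbation with further nonnegative sampling branches, and the $\tfrac{\varepsilon}{2+\varepsilon}\rdemand$ of slack created by the weakening is exactly the budget that keeps all these correction outputs nonnegative. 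Carrying out this bookkeeping — fixing $\apxsmallcontrib$ and $\delta$, stating the cases precisely, and checking nonnegativity leaf by leaf — is the technical heart of the argument, and the step I expect to be the main obstacle.

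Finally, the complexity: the $O(1)$ roundings cost $O(\log n+\log(1/\varepsilon))$ bits each since the quantities lie in $[1,\,n\cdot2^{\lceil n\log n\rceil}]$, the Karchmer--Wigderson step costs $\depth$ bits by Theorem~\ref{thm:karchmer-wigderson}, and each of the $O(1)$ sampling rounds costs $O(\log n)$ bits; in total $O(\log(1/\varepsilon)+\log n+\depth)$, as claimed. Since (after discarding zero-weight inputs) every truncation of a weighted threshold function is again a weighted threshold function, Theorem~\ref{thm:monotone-threshold-circuit} lets us take $\depth=O(\log^2 n)$ with no extra assumption, yielding the unconditional bound $O(\log(1/\varepsilon)+\log^2 n)$ of Lemma~\ref{lem:mainlemma}.
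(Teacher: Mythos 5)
Your protocol has the same architecture as the paper's: round a constant number of quantities to $O(\log(1/\varepsilon)+\log n)$ bits, split into two cases according to whether Bob's large-item mass certifies a large index in $B\smallsetminus A$, resolve a Karchmer--Wigderson game on a truncation to find $i^*$ in the first case, and realize each remaining nonnegative term by uniform sampling. But the step you yourself defer as ``the main obstacle'' --- the exact decomposition and leaf-by-leaf nonnegativity check in the case where no $i^*$ is isolated --- is precisely the content of the lemma, and the route you sketch for it does not go through as stated. Concretely, you define $\largeitems=\{i: \size_i\geqslant \apxrdemand\}$ with $\apxrdemand$ rounded \emph{up}, so items with $\rdemand\leqslant \size_i<\apxrdemand$ count as small; the identity you then use, $\rsize(B\smallsetminus A)=\rsize(\largeitems\cap(B\smallsetminus A))+\size(\smallitems\cap(B\smallsetminus A))$, fails for those borderline items (they contribute $\rsize_i=\rdemand$, not $\size_i$), and the error has the wrong sign: it makes the expected output too \emph{large}, and an over-count cannot be cancelled by ``further nonnegative sampling branches'' --- you would have to subtract, which is exactly what the nonnegativity constraint forbids, and the amount to subtract depends on $b$ restricted to the borderline items, which Alice does not know. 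Similarly, saying the middle term $\size(\smallitems\cap B)-\size(\smallitems\cap A)-\alpha\rdemand$ is ``delivered using $\apxrdemand,\apxsmallcontrib$ together with a correction branch'' names the difficulty rather than resolving it: neither player can evaluate that term, so one must exhibit an explicit split into a part Bob can compute and a part Alice can compute, each provably nonnegative.

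The paper's proof closes these gaps with two devices you did not use. First, since the item sizes are public data, Alice sends the set $\largeitems=\{i:\size_i\geqslant\rdemand\}$ \emph{exactly} with $O(\log n)$ bits (the index of a smallest large item), so borderline items never arise and every sampling output can use the true $\rsize_i$; only three scalars are rounded --- $\apxrdemand=(1+\delta)^k\leqslant\rdemand<(1+\delta)\apxrdemand$ (rounded \emph{down}), a strict under-approximation $\tilde{\Delta}$ of $\demand-\size(\largeitems\cap A)$, and $\apxsmallcontrib\leqslant\size(\smallitems\cap A)<\apxsmallcontrib+\delta\apxrdemand$ --- so $\delta=\Theta(\varepsilon)$ suffices (your $\delta=\Theta(\varepsilon/n)$ is harmless for the bit count but is compensating for a per-item error that the paper's scheme never incurs). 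Second, the case test is $\size(\largeitems\cap B)\geqslant\demand-\tilde{\Delta}$, which simultaneously gives $g(b)=1$ and $g(a)=0$ for the truncation with threshold $\demand-\tilde{\Delta}$; and in the complementary case the slack is written as the sum of exactly four nonnegative terms: $\rsize(\largeitems\cap(B\smallsetminus A))$ and $\size(\smallitems\cap(A\smallsetminus B))$ (sampled), $\size(\smallitems\cap B)-\apxsmallcontrib-(1-\delta)\apxrdemand$ (computed by Bob, nonnegative by the case hypothesis and feasibility of $b$), and $\apxsmallcontrib-\size(\smallitems\cap A)+(1-\delta)\apxrdemand-\alpha\rdemand$ (computed by Alice, nonnegative precisely because $\delta$ is chosen with $(1-2\delta)/(1+\delta)=\alpha$ --- this is where the $\varepsilon$-weakening is spent). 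Without this explicit split and the accompanying nonnegativity verification, the proposal establishes the easy case but not the lemma.
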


Before giving the proof, let us remark that Theorem~\ref{thm:main} follows
directly from this lemma. Indeed,  the extra assumptions in the lemma are
without loss of generality: the fact that we may assume without loss of
generality that the item sizes $\size_i$ are positive integers that can be
written down with at most $\lceil n \log n \rceil$ bits, is due to a classic
result from~\cite{Muroga71}; and the fact that we may also assume that the demand
$D$ is a positive integer with $\max \{s_i \mid i \in [n]\} \leqslant
D \leqslant \sum_{i=1}^n s_i$ should be clear.

Moreover, Lemma~\ref{lem:mainlemma} implies that we can obtain a relaxation of polynomial size if all truncations of the weighted threshold function have monotone circuits of logarithmic depth. In particular, this is the case if all item sizes are polynomial in
$n$. In that case the threshold function (and its truncations) can  simply be written as
the majority function on $O(\sum_i s_i)$ input bits and, as such functions have
monotone circuits of fan-in~$2$ of logarithmic depth, i.e., depth $O(\log
\left( \sum_i s_i \right))$. Thus, using majority functions instead of threshold functions in our communication protocol, we get that for all $\varepsilon \in (0,1)$, $c>0$, item sizes $\size_1, \ldots, \size_n \in \{0, 1, \dots, n^c\}$  and demand $D \in \N$, there exists a size-$(1/\varepsilon)^{O(1)} n^{O(c)}$ extended formulation defining an LP relaxation of min-knapsack with integrality gap at most $2+\varepsilon$. However, it is important to note here that when $c$ is a constant (and hence the sizes $\size_1,\ldots,\size_n$ and the demand $D$ are polynomial in $n$), we can write down an exact polynomial size LP formulation of the min-knapsack problem\footnote{This can be done by casting the folklore \emph{Dynamic Programming} algorithm for the min-knapsack problem, as a minimium $s$-$t$ flow problem on a weighted graph $G$ with polynomial many vertices, and arguing that the well-known exact LP relaxation of the latter is also an exact LP relaxation of the former. The reader familiar with the dynamic programming algorithm should notice that the edges $G$ would only depend on the sizes of the items, whereas the weights on these edges would only depend on the costs of the items. Hence in the resulting LP formulation, the constraints depend only on the sizes, and the costs only appear in the objective function.}. 

\begin{comment}
Moreover, Lemma~\ref{lem:mainlemma} implies that we can obtain a relaxation of polynomial size if all truncations of the weighted threshold function have monotone circuits of logarithmic depth. In particular, this is the case if all item sizes are polynomial in
$n$. In that case the threshold function (and its truncations) can  simply be written as
the majority function on $O(\sum_i s_i)$ input bits and, as such functions have
monotone circuits of fan-in~$2$ of logarithmic depth, i.e., depth $O(\log
\left( \sum_i s_i \right))$, we obtain the following corollary. 

\begin{Corollary}
  For all $\varepsilon \in (0,1)$, $c>0$, item sizes $\size_1, \ldots, \size_n \in \{0, 1, \dots, n^c\}$  and demand $D \in \N$, there exists a size-$(1/\varepsilon)^{O(1)} n^{O(c)}$ extended formulation defining an LP relaxation of min-knapsack with integrality gap at most $2+\varepsilon$.
\end{Corollary}
\end{comment}

We now proceed by proving our main technical result, i.e., Lemma~\ref{lem:mainlemma}.

\begin{proof}[Proof of Lemma~\ref{lem:mainlemma}]
Let $\alpha = \alpha(\varepsilon) := 2/(2 + \varepsilon)$ and 
$\delta > 0$ be such $(1-2\delta)/(1+\delta) = \alpha$. Thus 
$\delta = \varepsilon / (6+2\varepsilon) = \Theta(\varepsilon)$. 
As above, we denote by $a \in f^{-1}(0)$ the input of Alice and
$b \in f^{-1}(1)$ that of Bob, and let $A$ and $B$ denote the 
corresponding subsets of $[n]$.

First, Alice tells Bob the identity of the set of large items $\largeitems 
= \{i \in [n] \mid \size_i \geqslant U\}$ and its complement, the 
set of small items $\smallitems$. This costs $O(\log n)$ bits of 
communication. For instance, Alice can simply send the index of 
a smallest large item to Bob, or inform Bob that $\largeitems$ is 
empty. Recall that
$$
\rdemand = D - \size(A) = D - \size(\largeitems \cap A) - \size(\smallitems \cap A)\,.
$$

Then, Alice sends Bob the unique nonnegative integer $k$ such that 
$(1 + \delta)^k \leqslant \rdemand < (1+\delta)^{k+1}$. This sets the 
scale at which the protocol is operating. 
Since $\rdemand \leqslant n \cdot 2^{\lceil n \log n \rceil} \leqslant 2^{n^2}$, 
we have $(1+\delta)^k \leqslant 2^{n^2}$. 
This implies that $k = O((1/\varepsilon) n^2)$, thus $k$ can be sent to Bob with 
$\log (1/\varepsilon) + 2\log n + O(1) = O(\log(1/\varepsilon) + \log n)$ bits.
Let $\apxrdemand := (1 + \delta)^k$. 

To efficiently communicate an approximate value of $\size(\largeitems \cap A)$,
 Alice sends the unique nonnegative integer $\ell$ such that 
$$
(1 + \ell \delta) \apxrdemand < \demand - \size(\largeitems \cap A) \leqslant (1 + \ell \delta) \apxrdemand + \delta \apxrdemand.
$$
Since small items have size at most $\rdemand$ and we have at most $n$ of them, 
we have $\size(\smallitems \cap A) \leqslant \rdemand n$. Hence, $\demand - 
\size(\largeitems \cap A) = \rdemand + \size(\smallitems \cap A) \leqslant (n+1) 
\rdemand \leqslant (n+1) (1+\delta) \apxrdemand$. Since 
$(1+\ell\delta)\apxrdemand < (n+1)(1+\delta)\apxrdemand$, we have 
$\ell = O((1/\varepsilon)n)$. This means that Alice can communicate 
$\ell$ to Bob with only $O(\log(1/\varepsilon)+\log n)$ bits. Let
$\tilde{\Delta} = \tilde{\Delta}(\delta) := (1 + \ell \delta) \apxrdemand$. This is Bob's 
strict under-approximation of $\demand - \size(\largeitems \cap A)$, so that 
$\demand - \tilde{\Delta}$ is a strict over-approximation of $\size(\largeitems \cap A)$.

Bob checks if $s(\largeitems \cap B) \geqslant \demand - \tilde{\Delta}$. If this is the case, then the weighted threshold function $g$ such that $g(x) = 1$ iff $\sum_{i \in \largeitems}\size_i x_i \geqslant \demand - \tilde{\Delta}$ separates $a$ from $b$ in the sense that $g(a) = 0$ and $g(b) = 1$. Since $g$ is a truncation of $f$, Alice and Bob can exchange $\depth$ bits to find an index $i^* \in \largeitems$ such that $a_{i^*} = 0$ and $b_{i^*} = 1$. 

We can rewrite the slack $S^{\varepsilon}_{a,b} = \rsize(B \smallsetminus A) - \alpha \rdemand$ as
\begin{align}
 \label{eq:slackbig}
   \rsize(B \smallsetminus A \smallsetminus \{i^*\}) + \rsize_{i^*} - \alpha \rdemand 
   =~  \rsize(B \smallsetminus A \smallsetminus \{i^*\}) + (\rdemand - \alpha \rdemand)
 =~  \sum_{i : a_i = 0,\ i \neq i^*} \rsize_i b_i + (\rdemand - \alpha \rdemand)
\,.
\end{align}
With the knowledge of $i^*$, Alice and Bob can compute the slack as follows:
\begin{enumerate}
  \item Alice samples a uniformly random number $i \in [n]$. If $i \notin A$, continue
    to the next step, otherwise Alice outputs $0$ and terminates the communication. 
  \item If $i=i^*$, Alice outputs $n \cdot (U-\alpha U)$ and terminates the communication,
    otherwise continue.
  \item Alice sends $i$ to Bob using $\lceil \log n \rceil$ bits of communication,
    and Bob sends $b_i$ back to Alice.
  \item Alice outputs $n \cdot s_i' b_i$.
\end{enumerate}
The above communication costs $O(\log n)$ bits, all outputs are non-negative
and can be computed with the information available to each player,
and by linearity of expectation, the expected output is exactly the slack~(\ref{eq:slackbig}).
Together with the $O(\log(1/\varepsilon) + \log n + t)$ bits communicated 
previously, we conclude that in this case there is a protocol that computes
the slack in expectation with $O(\log(1/\varepsilon) + \log n + t)$ bits of
communication.

In the other case, we have $s(\largeitems \cap B) < \demand - \tilde{\Delta}$. Because $b \in \{0,1\}^n$ is feasible, we get
$$
s(B) \geqslant \demand
\iff
\underbrace{s(\largeitems \cap B)}_{< \demand - \tilde{\Delta}} + s(\smallitems \cap B) \geqslant D\,,
$$
therefore we can bound $s(\smallitems \cap B)$ as
\begin{align}
   \label{eq:lbsmallcontrib}
s(\smallitems \cap B)  > \tilde{\Delta}  \geqslant ~ \demand - \size(\largeitems \cap A) - \delta \apxrdemand  = ~ \size(\smallitems \cap A) + \rdemand - \delta \apxrdemand 
\geqslant ~ \apxsmallcontrib + (1-\delta) \apxrdemand\,,
\end{align}
where $\apxsmallcontrib$ is the unique integer multiple of $\delta \apxrdemand$ such that
\begin{equation}
  \label{eq:lbsmall1}
\apxsmallcontrib \leqslant \size(\smallitems \cap A) < \apxsmallcontrib + \delta \apxrdemand\,.
\end{equation}

Since $\apxsmallcontrib \leqslant \size(\smallitems \cap A) \leqslant \rdemand n \leqslant (1+\delta) \apxrdemand n$, Alice can communicate $\apxsmallcontrib$ to Bob with $O(\log (1/\varepsilon) + \log n)$ bits.

This implies
\begin{align*}
 \size(\smallitems \cap (B \smallsetminus A)) 
= ~  \size(\smallitems \cap B) - \size(\smallitems \cap (A \cap B)) 
> ~ \apxsmallcontrib + (1-\delta) \apxrdemand - \size(\smallitems \cap (A \cap B))\,.
\end{align*}

Recall that by definition of $\apxrdemand$, we have $(1+\delta)\apxrdemand > \rdemand$, therefore
\begin{equation}
  \label{eq:lbsmall2}
  (1-2\delta)\apxrdemand - \alpha \rdemand > (1-2\delta)\apxrdemand - \alpha (1+\delta) \apxrdemand=0\,.
\end{equation}

We now rewrite the slack as
\begin{align*}
 \rsize(B \smallsetminus A) - \alpha \rdemand 
= ~& \underbrace{\rsize(\largeitems \cap (B \smallsetminus A))}_{=~ \sum_{i \in \largeitems \smallsetminus A} \rsize_i b_i} + \underbrace{\size(\smallitems \cap B) - \apxsmallcontrib - (1-\delta) \apxrdemand}_{\textrm{non-negative by~(\ref{eq:lbsmallcontrib})}} + \underbrace{\size(\smallitems \cap (A \smallsetminus B))}_{\sum_{i \in \smallitems \cap A} \size_i (1-b_i)} \\
& + \underbrace{\apxsmallcontrib - \size(\smallitems \cap A) + (1-\delta) \apxrdemand - \alpha \rdemand}_{\textrm{non-negative by~(\ref{eq:lbsmall1}) and~(\ref{eq:lbsmall2})}}\,.
\end{align*}
Similar to the previous case, we design a protocol to compute the slack as follows:
\begin{enumerate}
  \item Alice samples a uniformly random number $i \in [n+2]$. 
    If $i=n+2$, Alice outputs the normalized value of the last term, i.e., 
    $(n+2) \cdot (\apxsmallcontrib - \size(\smallitems \cap A) + (1-\delta) \apxrdemand - \alpha \rdemand)$, 
    and terminates the communication. Otherwise, she sends $i$ to Bob using $O(\log n)$ bits.
  \item If $i=n+1$, Bob outputs $(n+2) \cdot (\size(\smallitems \cap B) - \apxsmallcontrib - (1-\delta) \apxrdemand)$, and ends the communication. Otherwise, he replies to Alice with $b_i$.
  \item If $i \in \largeitems \smallsetminus A$, Alice outputs $(n+2) \cdot \rsize_i b_i$;
    if $i \in \smallitems \cap A$, she outputs $(n+2) \cdot \size_i (1-b_i)$; otherwise she outputs $0$.
\end{enumerate}
We can verify that the outputs of both players can be computed with information available
to them, and that the outputs are non-negative due to Equation~(\ref{eq:lbsmallcontrib}),~(\ref{eq:lbsmall1}) and~(\ref{eq:lbsmall2}), and the definition of the variables.
\end{proof}

\section{Flow-cover inequalities.}
\label{sec:FCI}
A variant of the knapsack  cover inequalities, known as the \emph{flow cover inequalities}, was also used to strengthen LPs for many problems such as the Fixed Charge Network Flow problem~\cite{CFLP00} and the Single-Demand Facility Location problem~\cite{CS08}. 
In this section, we describe the application of flow cover inequalities to the Single-Demand Facility Location problem as used in~\cite{CS08}, and then give an $O(\log^2 n)$-bit two-party communication protocol that computes a weakened version of these inequalities.

In the Single-Demand Facility Location problem, we are given a set $F$ of $n$ facilities, such that each facility $i \in F$ has a capacity $s_i$, an opening cost $f_i$, and a per-unit cost $c_i$ to serve the demand. The goal is to serve the demand $D$ by opening a subset $S\subseteq F$ of facilities such that the combined cost of opening these facilities and serving the demand is minimized. The authors of~\cite{CS08} cast this problem as an Integer Program, and showed that its natural LP relaxation has an unbounded integrality gap. To reduce this gap, they strengthened the relaxation by adding the so-called flow cover inequalities that we define shortly (See Section 3 in~\cite{CS08} for a more elaborate discussion). 

A feasible solution $(x,y)$ with $y \in \{0,1\}^n$ and $x\in [0,1]^n$ for the Single-Demand Facility Location LP can be thought of as follows: 
for each $i\in F$, $y_i \in \{0,1\}$ indicates if the $i$-th facility is open, and $x_i \in [0,1]$ indicates the fraction of the demand $D$ being served by the $i$-th facility. A feasible solution $(x,y)$ must then satisfy that \begin{enumerate}
\item The demand is \emph{met}, i.e., $\sum_i x_i  =1$.
\item No facility is supplying more than its capacity, i.e., $0 \leqslant x_i D \leqslant y_i s_i $ for all $i\in F$. 
\end{enumerate}
For a subset $J \subseteq F$ of facilities and a feasible solution $(x,y)$, we denote by $B = \{i \in F: y_i = 1\} \subseteq [F]$ the set of \emph{open} facilities according to $y$, and we define the quantity $\xsize(J)$ to be the overall demand served by the facilities in $J$, i.e., $\xsize(J) = \sum_{i \in J} x_i D$.\footnote{Note that since we are assuming that $(x,y)$ is feasible, we get that $x(J)=x(J \cap B)$.} 
We also define the quantities $\size(\cdot)$ and $\rsize(\cdot)$ as in Section~\ref{sec:sub:overview}. 

Carnes and Shmoys~\cite{CS08} showed that adding the flow cover inequalities (FCI) reduces the integrality gap of the natural LP relaxation down to 2. These inequalities are defined as follows:
for any \emph{infeasible} set $A \subseteq F$ (i.e.,  $A\subseteq F$ such that $\size(A) < D$), 
and for all partitions of $F \setminus A = F_1 \sqcup F_2$,
the following inequality holds for all feasible solutions $(x,y)$: 
\begin{align}
\label{eq:fci1}
\rsize(F_1 \cap B) + \xsize (F_2 \cap B) \geqslant U\,, \tag{FCI}
\end{align} 
where $U = D - \size(A)$ is the residual demand and $\rsize_i = \min\{\size_i, U\}$.
For brevity, we refer to an infeasible set $A$ along with some partition $F_1 \sqcup F_2=F \setminus A$ as an \emph{infeasible tuple} $(A,F_1,F_2)$. Note that for $F_2 = \emptyset$, the flow-cover inequalities are the same as the knapsack cover inequalities.

Similar to the knapsack cover inequalities, the goal is to compute the slack of a \emph{relaxed} version of~(\ref{eq:fci1}) in expectation for any feasible solution $(x,y)$ and any infeasible tuple $(A,F_1,F_2)$. Namely, for any $\varepsilon \in (0,1)$, let $\alpha=2/(2+\varepsilon)$, then our goal is to design an $O(\log ^2 n+\log(1/\varepsilon))$-complexity two-party communication protocol with private randomness and nonnegative outputs whose expected output equals $\rsize(F_1 \cap B) + \xsize (F_2 \cap B) - \alpha U$.
 That is, we want to compute the slack with respect to a given (weakened) flow-cover inequality 
$\rsize(F_1 \cap B) + \xsize (F_2 \cap B) \geqslant \alpha U$, where the RHS of~(\ref{eq:fci1}) is replaced by $\alpha U$.
This implies the existence of an LP of size $(1/\varepsilon)^{O(1)} n^{O(\log n)}$ with an integrality at most $2+\varepsilon$ for the Single-Demand Facility Location problem.

In Section~\ref{sec:fci:notation}, we set up the notation and define a class of feasible solutions with a certain special structure which we refer to as \emph{canonical} feasible solutions. We design the promised communication protocol restricted to canonical solutions in Section~\ref{sec:fci:protocolcanonical}, and extend it  to arbitrary feasible solutions in Section~\ref{sec:fci:protocol}.

\subsection{Preliminaries.}
\label{sec:fci:notation}
Let $(x,y)$ be a feasible solution for the flow-cover problem with demand $D$, and let $B = \{i \in F: y_i = 1\}$ denote the support of $y$. In this terminology, $B$ only indicates which facilities are open, but it does not capture the \emph{relative} demand being served through each of them.
However this distinction will be essential for designing the protocol, hence we partition $B$ into three disjoint sets 
$B = \widetilde{F}_1 \sqcup \widetilde{F}_2 \sqcup \widetilde{F}_3$ as follows:
\begin{align*}
  \widetilde{F}_1 &= \{i \in B: x_i D = s_i y_i\}\,, &&
  \widetilde{F}_2  &= \{i \in B: 0<x_i D < s_i y_i\}\,, &&
  \widetilde{F}_3  &= \{i \in B: x_i D= 0\} \,.
\end{align*}
We first focus on feasible solutions $(x,y)$ that exhibit a certain structure, and then generalize to arbitrary solutions. Namely, we restrict our attention here and in Section~\ref{sec:fci:protocolcanonical} to \emph{canonical} feasible solutions defined as follows:
\begin{Definition}
\label{def:canonical}
A feasible solution $(x,y)$ with associated sets $\widetilde{F}_1,\widetilde{F}_2,\widetilde{F}_3$ is \emph{canonical} if $\widetilde{F}_2$ contains at most one facility, i.e., $|\widetilde{F}_2| \leqslant 1$. 
In other words, in a canonical feasible solution, there is at most one facility $j$ 
that supplies a non-zero demand $x_jD>0$ which is \emph{not} equal to its full capacity $s_j$.
\end{Definition}
Recall that we are interested in computing 
\begin{align}
\label{eq:FCI}
\rsize(F_1 \cap B) + \xsize(F_2 \cap B) - \alpha U %\geqslant 0
\end{align} 
in expectation, which can be expanded as follows:
\begin{align}
 \label{eq:fciexpanded}
 \rsize(F_1 \cap \widetilde{F}_1) 
  + \rsize(F_1 \cap \widetilde{F}_2)
  + \rsize(F_1 \cap \widetilde{F}_3)
+ \xsize(F_2 \cap \widetilde{F}_1)
  + x(F_2 \cap \widetilde{F}_2) 
  +x(F_2 \cap \widetilde{F}_3) - \alpha U \,.
\end{align}
We get from the definition of the set $\widetilde{F}_3$ that the second to last term in the above equation is $0$ when restricted to canonical feasible solutions. In fact, one can completely get rid of the overall contribution of $\widetilde{F}_3$ in the above equation, since intuitively, \emph{closing} down the facilities in $\widetilde{F}_3$ should not alter the feasibility of the solution, and hence Equation~(\ref{eq:fciexpanded}) should still be positive even without accounting for the contribution of  $\rsize(F_1 \cap \widetilde{F}_3)$. In the communication protocol setting, this intuition translates to designing a protocol that only deals with canonical feasible solutions restricted to $\widetilde{F}_3=\emptyset$.
 
To see that this is without loss of generality, consider a canonical feasible solution $(x,y)$ such that $\widetilde{F}_3$ is \emph{not} empty, and let $(x,\bar{y})$ be the projection of $(x,y)$ on $\widetilde{F}_1\cup \widetilde{F}_2$ --- 
that is, for all $i \in B \setminus \widetilde{F}_3$, set $\bar{y}_i=y_i$, and for all $i \in \widetilde{F}_3$, set $\bar{y}_i=0$. It follows that $(x,\bar{y})$ is also a canonical feasible solution, as the items whose support is $\widetilde{F}_3$ do not contribute to the feasibility of the solution, and the cardinality of $\widetilde{F}_2$ does not change. Thus, for any infeasible tuple $(A,F_1,F_2)$, Equation~(\ref{eq:fciexpanded}) applied to $(x,\bar{y})$ can be written as 
\begin{align}
   \label{eq:fciexpanded2}
   \rsize(F_1 \cap \widetilde{F}_1) 
  + \rsize(F_1 \cap \widetilde{F}_2)   + \xsize(F_2 \cap \widetilde{F}_1) 
  + \xsize(F_2 \cap \widetilde{F}_2) 
   - \alpha U \,,
\end{align}
which is also non-negative, as it is the slack of $(x,\bar{y})$ and $(A,F_1,F_2)$. 
Therefore, for any feasible solution $(x,y)$, the slack as given by Equation~(\ref{eq:fciexpanded}) can be viewed
as the summation of Equation~(\ref{eq:fciexpanded2}) and the non-negative term $\rsize(F_1 \cap \widetilde{F}_3)$.
The latter is easy to compute with a small communication protocol\footnote{To compute $\rsize(F_1 \cap \widetilde{F}_3)$, Bob samples an index $i \in [n]$. If $i \notin \widetilde{F}_3$, he outputs 0 and terminates the protocol, otherwise he sends $i$ to Alice. If $i \in F_1$, Alice outputs $n\cdot \rsize(i)$, otherwise, she outputs 0.}, thus if Alice and Bob can devise a communication protocol $\Pi$ that computes~(\ref{eq:fciexpanded2}) in expectation, they can then easily compute~(\ref{eq:fciexpanded}) in expectation. For example, Alice can generate a uniformly random bit $b\in \{0,1\}$, and \begin{itemize}
\item if $b=0$, then Alice and Bob run the protocol that computes $\rsize(F_1 \cap \widetilde{F}_3)$, and return \emph{twice} its output.
\item if $b=1$, then Alice and Bob run the protocol $\Pi$ that computes~(\ref{eq:fciexpanded2}), and return \emph{twice} its output.
\end{itemize}

Moreover, since $|\widetilde{F}_2| \leqslant 1$, and using the fact that $x_i D = s_i y_i$ for $i\in \widetilde{F}_1 $, we can further simplify Equation~(\ref{eq:fciexpanded2}) as follows:
\begin{align}
\label{eq:fcisimplified}
\rsize(F_1 \cap  \widetilde{F}_1) 
+ \size(F_2 \cap \widetilde{F}_1)  
+ \gamma(x,y,A,F_1,F_2) - \alpha U \,,
\end{align}
where the function $ \gamma \coloneqq  \gamma(x,y,A,F_1,F_2)$ is defined as
\begin{align}
\label{eq:gamma}
 \gamma =
  \begin{cases}
    \rsize_j y_j  & \quad \text{if } \widetilde{F}_2 = \{j\} \subseteq  F_1\\
    x_j D  & \quad \text{if } \widetilde{F}_2 = \{j\} \subseteq F_2\\
        0       & \quad \text{if }   \widetilde{F}_2 = \{ j\} \subseteq A\text{, or } \widetilde{F}_2 = \emptyset  \,.\\
  \end{cases}
\end{align}
For simplicity of notation, we drop the parameters from $\gamma(x,y,A,F_1,F_2)$ when its is clear from the context.% and denote it by $\gamma$, i.e., $\gamma \coloneqq \gamma(x,y,A,F_1,F_2)$.

\subsection{Randomized Protocol for Canonical Feasible Solutions.}
\label{sec:fci:protocolcanonical}
In what follows, we define a randomized communication protocol where Alice gets an infeasible tuple $(A,F_1,F_2)$, and Bob gets a \emph{canonical} feasible solution $(x,y)$ with $\widetilde{F}_3 = \emptyset$, and the goal is to compute the value of~(\ref{eq:fcisimplified}) in expectation. 
 
For a fixed $\varepsilon>0$, we define $\alpha:=\alpha(\varepsilon)=2/(2+\varepsilon)$,
$\delta:=\delta(\varepsilon)=\varepsilon/(6+2\varepsilon)$ as in the min-knapsack case.
Similar to the protocol for the knapsack cover inequalities, 
Alice sends Bob $O(\log n)$ bits at the beginning so that Bob knows $\largeitems, \smallitems,\apxrdemand,\apxsmallcontrib$ and $\widetilde{\Delta}$. 
Recall that $\largeitems$ is the set of large items (i.e., $i \in F$ such that $s(i) \geqslant U$), $\smallitems$ is the set of small items,
$\apxrdemand$ is an under-approximation of the residual demand $\rdemand$, 
$D-\widetilde{\Delta}$ is an over-approximation of $\size(\largeitems \cap A)$ and
$\apxsmallcontrib$ is an under-approximation of $\size(\smallitems \cap A)$.
Moreover, knowing his input $(x,y)$, Bob can construct the sets $\widetilde{F}_1$ and $\widetilde{F}_2$. 
Thus, by exchanging an additional $O(\log n)$ bits, Alice and Bob can both figure out which condition is satisfied for Equation~(\ref{eq:gamma}).

To compute the value of~(\ref{eq:fcisimplified}) in expectation, we distinguish between the following cases: 
\begin{description}
  \item[Case 1:] Either $\widetilde{F}_2 = \emptyset$, or $\widetilde{F}_2=\{j\}$ and $j \in A \cup F_1$.
    In this case, we have that the value $\gamma $ is either $0$ or $\rsize_j y_j$. 
    Bob now checks if 
    \begin{align}
      \label{eq:case1threshold} 
        s(\largeitems \cap (\widetilde{F}_1 \cup \widetilde{F}_2)) \geqslant D - \widetilde{\Delta}\,.
    \end{align}
    %\begin{description}
      {\bf Equation~(\ref{eq:case1threshold}) holds:} 
        In the same way as in the min-knapsack protocol, 
        Alice and Bob exchange $O(\log^2 n)$ bits to identify an index $i^* \in \largeitems$ 
        such that $i^* \in ((\widetilde{F}_1 \cup \widetilde{F}_2) \setminus A)$. More precisely, this index $i^*$  belongs to one of the following three sets: either $i^* \in F_1 \cap \widetilde{F}_1$, or $i^* \in F_2 \cap \widetilde{F}_1$, or $i^*=j$ and $\widetilde{F}_2=\{j\}$. Alice and Bob can thus exchange $O(1)$ more bits to figure out the condition that $i^*$ satisfies. In what follows, we design an $O(\log n)$-communication protocol to handle each of these cases.

        If $i^* \in F_2 \cap \widetilde{F}_1$, then Equation~(\ref{eq:fcisimplified}) can be rewritten as 
        \begin{align}
        \label{eq:fci:case1}
                  \rsize( F_1 \cap \widetilde{F}_1) 
          + \size((F_2 \cap \widetilde{F}_1)\setminus\{i^*\})
          + {\gamma} + \left(s_{i^*} - \alpha U\right)\,.
        \end{align}
        One can see that each of the above four terms is non-negative, and similar to the min-knapsack protocol, Alice and Bob can exchange $O(\log n)$ bits and compute the value of~(\ref{eq:fci:case1}) as follows: \begin{enumerate}
          \item Bob sends Alice the bit $y_j$ and the index $j$ using $\lceil \log (n) \rceil + 1$ bits if and only if $\widetilde{F}_2 = \{j \}$, and he sends 0 if $\widetilde{F}_2 = \emptyset$.
        \item Alice samples a uniformly random index $i \in [n+1]$. 
          If $i = n+1$, Alice uses the knowledge of $\widetilde{F}_2$ (and thus $\gamma$) 
          to compute the normalized value of the last terms, 
          that is, she outputs $(n+1)\cdot \left(\gamma + s_{i^*} - \alpha U\right)$, 
          and terminates the communication. Otherwise, she sends $i$ to Bob using $\lceil \log(n) \rceil$ bits.
        \item If $i \in \widetilde{F}_1$, Bob sends $y_i$ to Alice; otherwise, Bob outputs 0 and terminates the communication.
        \item If $i\in F_1$, Alice outputs $(n+1) \cdot \rsize_i y_i$; if $i \in F_2\setminus \{i^*\}$, she outputs $(n+1)\cdot s_i y_i$; otherwise she outputs 0. 
        \end{enumerate}
        The above communication costs $O(\log n)$ bits, all outputs are non-negative
and can be computed with the information available to each player,
and by linearity of expectation, the expected output is exactly the slack~(\ref{eq:fcisimplified}) when $i^* \in F_2 \cap \widetilde{F}_1$.

        The case where $i^* \in F_1 \cap \widetilde{F}_1$ is handled similarly.
        
        In the remaining case, we have $\widetilde{F}_2=\{j\}$ and $i^*=j \in F_1 \cap \largeitems$, and hence 
        $\gamma=\rsize_j y_j > \alpha U$. This can be handled by changing the second step of the protocol described earlier in such a way that Alice outputs $(n+1)\cdot(\rsize_j - \alpha U)$ if $i =n+1$.
      
      {\bf Equation~(\ref{eq:case1threshold}) does not hold: }
        Recall that since $(x,y)$ is a feasible solution (and $\widetilde{F}_3=\emptyset$), we have
        \begin{align*}
          D ~\le~& \xsize(\widetilde{F}_1) + \xsize(\widetilde{F}_2) \\
          ~ = ~ & \xsize(\smallitems \cap \widetilde{F}_1) + \xsize(\smallitems \cap \widetilde{F}_2)
          + \xsize(\largeitems \cap \widetilde{F}_1) + \xsize(\largeitems \cap \widetilde{F}_2) \\
          ~\le~ & \xsize(\smallitems \cap \widetilde{F}_1) + \xsize(\smallitems \cap \widetilde{F}_2) 
         + \size(\largeitems \cap \widetilde{F}_1) + \size(\largeitems \cap \widetilde{F}_2) \\
          ~= ~ & \xsize(\smallitems \cap \widetilde{F}_1) + \xsize(\smallitems \cap \widetilde{F}_2) 
           + \size(\largeitems \cap (\widetilde{F}_1 \cup \widetilde{F}_2))\,.
        \end{align*}
        By the assumption that Equation~(\ref{eq:case1threshold}) does not hold, together
        with the argument in Equation~(\ref{eq:lbsmallcontrib}),
        we conclude that
        \begin{equation}
          \label{eq:cas1non}
          \xsize(\smallitems \cap \widetilde{F}_1) + \xsize(\smallitems \cap \widetilde{F}_2) 
          > \widetilde{\Delta} \geqslant \apxsmallcontrib + (1-\delta) \apxrdemand\,.
        \end{equation}
        Note that since $|\widetilde{F}_2| \leqslant 1$, we get that
        \begin{align*}
          \xsize(\smallitems \cap \widetilde{F}_2)=
          \begin{cases}
            0       & \quad \text{if } \widetilde{F}_2 = \emptyset \\
            0       & \quad \text{if } \widetilde{F}_2 = \{j\} \subseteq \largeitems\\
            x_j D  & \quad \text{if } \widetilde{F}_2 = \{j\} \subseteq \smallitems\,.\\
          \end{cases}
        \end{align*}
       We also have that 
        $\xsize(\smallitems \cap \widetilde{F}_1) = \size(\smallitems \cap \widetilde{F}_1)$  by the definition of $\widetilde{F}_1$.
        Together this gives that the summation $\size(\smallitems \cap \widetilde{F}_1)  + \xsize(\smallitems \cap \widetilde{F}_2)$ is lower bounded by $\apxsmallcontrib + (1-\delta) \apxrdemand$.
 %       \begin{align}
  %        \nonumber
  %       & \size(\smallitems \cap \widetilde{F}_1)  + \xsize(\smallitems \cap \widetilde{F}_2)\\ \label{eq:cas1non}  > &  \apxsmallcontrib + (1-\delta) \apxrdemand\,.
%        \end{align}
        We rewrite~(\ref{eq:fcisimplified}) as 
        \begin{align}
          \label{eq:case1expanded}
                  &\rsize(F_1 \cap \widetilde{F}_1) 
          + \size(F_2 \cap \widetilde{F}_1)  + \gamma - \alpha U \\
		\nonumber
          = ~& \rsize(\largeitems \cap F_1 \cap \widetilde{F}_1)
          + \size(\largeitems \cap F_2 \cap \widetilde{F}_1)  
            + \size(\smallitems \cap (\widetilde{F}_1 \setminus A))   
           + \gamma - \alpha U\\
           \nonumber
             =~ & \rsize(\largeitems \cap F_1 \cap \widetilde{F}_1)
          + \size(\largeitems \cap F_2 \cap \widetilde{F}_1) 
          + \size(\smallitems \cap (A \setminus B))    
          + \size(\smallitems \cap A \cap \widetilde{F}_2 )  \\ \nonumber&+ \size( \smallitems \cap \widetilde{F}_1)    - \size(\smallitems \cap A)
         + \gamma - \alpha U \,.
        \end{align}
        The non-negativity of the first three terms is straightforward, and Alice and Bob
        can compute them by exchanging $O(\log n)$ bits\footnote{For instance, to compute 
          $\rsize(\largeitems \cap F_1 \cap \widetilde{F}_1)$, Alice samples uniformly $i \in [n]$ and sends
          it to Bob, Bob responds with $b=1$ if $i \in \widetilde{F}_1$ and $b=0$ otherwise. Alice then outputs
        $n \cdot \rsize_i$ if $i \in \largeitems \cap F_1$ and $b=1$, and $0$ otherwise. The protocols for the
      second and the third term are very similar.}. 
        By adding and subtracting $(\apxsmallcontrib + (1-\delta) \apxrdemand - \xsize(\smallitems \cap \widetilde{F}_2))$ to the remaining terms in~(\ref{eq:case1expanded}), we can rearrange the terms and rewrite the rest as the sum of the following three non-negative terms that we can easily compute: 
        \begin{align}
       &\left(
      \size(\smallitems \cap \widetilde{F}_1)
       -\apxsmallcontrib - (1-\delta) \apxrdemand + \xsize(\smallitems \cap \widetilde{F}_2)\right)  +
       \left(\apxsmallcontrib  + (1-\delta) \apxrdemand- \alpha U - \size(\smallitems \cap A)\right) \nonumber \\ \label{eq:secondpartofFCI} & + 
       \left( 
\size(\smallitems \cap A \cap \widetilde{F}_2)
       + \gamma - \xsize(\smallitems \cap \widetilde{F}_2)\right)\,.
     \end{align} 

        The non-negativity of the first part follows from~(\ref{eq:cas1non}), and Bob has all the information to compute it on his own.
        The non-negativity of the second 
        part follows from our definition of $\apxsmallcontrib$ and $\apxrdemand$, 
        and their relation to $\delta$ and $\alpha$. Moreover, Alice has all the information to compute
        this part.

        To see that the third part (i.e., $\size(\smallitems \cap A \cap \widetilde{F}_2)
       + \gamma - \xsize(\smallitems \cap \widetilde{F}_2)$) is also non-negative and can easily be computed by one of the players, note that: \begin{enumerate}
        \item If $\xsize(\smallitems \cap \widetilde{F}_2)=0$, then clearly it is non-negative. 
        In this case, Bob communicates the set $\widetilde{F}_2$ to Alice using $O(\log n)$ bits so that she knows whether $\widetilde{F}_2=\emptyset$, or the item $j$ if $\widetilde{F}_2 = \{j\}$ and $j \in \largeitems$. Once $\widetilde{F}_2$ is known to Alice, she can compute both $\size(\smallitems \cap A \cap \widetilde{F}_2)$ and $\gamma$ (recall that $\gamma$ would be either 0 or $\rsize_j y_j = U$).
        \item If $\xsize(\smallitems \cap \widetilde{F}_2) = x_j D \ne 0$, then we have that $\widetilde{F}_2 = \{j\}$ and $j \in \smallitems$.  
        From our assumption of Case 1, we also have that $j \in A \cup F_1$. Since $A$ and $F_1$ are two disjoint sets, we get that: \begin{enumerate}
        \item If $j\in A$, then \begin{align*}
        \underbrace{\size(\smallitems \cap A \cap \widetilde{F}_2)}_{s_j y_j}+ \underbrace{\gamma}_{0} - x_j D 
         = ~s_j y_j - x_j D  \geqslant 0\,.
        \end{align*}
        \item If $j \in F_1$, then\begin{align*}
        \underbrace{\size(\smallitems \cap A \cap \widetilde{F}_2)}_{0}+ \underbrace{\gamma}_{s_jy_j} - x_j D =~ s_j y_j - x_j D  \geqslant 0\,.
        \end{align*} 
        \end{enumerate}
        Thus  it is also non-negative, and Bob can compute it on his own in this case.
       \end{enumerate}
%\end{description}
This concludes the communication problem in the case where either $\widetilde{F}_2 = \emptyset$, or $\widetilde{F}_2=\{j\}$ where $j \in A \cup F_1$.
  \item[Case 2:] $\widetilde{F}_2=\{j\}$ and $j \in F_2$. In this case $\gamma = x_j D$.
    This case is quite similar to Case 1, with the difference being that Bob checks at the beginning if 
    \begin{align*}
     \size(\largeitems \cap \widetilde{F}_1) \geqslant D - \widetilde{\Delta}\,,
    \end{align*}
    i.e., without including $\widetilde{F}_2$ compared to~(\ref{eq:case1threshold}). 
    
    If the condition was indeed satisfied, then the same reasoning as the first part of Case 1 resolves this case. Otherwise, we get 
    \begin{align}
      \label{eq:nonnegsmallcase2}
      \size(\smallitems \cap \widetilde{F}_1) + x_j D > \apxsmallcontrib + (1-\delta) \apxrdemand\,,
    \end{align}
    and using Equation~(\ref{eq:case1expanded}) from the second part of Case 1 yields that that \emph{first four} terms in this case are non-negative and easy to compute. Similarly, adding and subtracting $(\apxsmallcontrib + (1-\delta) \apxrdemand)$ to the \emph{last four} terms of~(\ref{eq:case1expanded}), and rearranging the terms we get 
    \begin{align*}
     \left( 
   \size(\smallitems \cap \widetilde{F}_1) -\apxsmallcontrib - (1-\delta) \apxrdemand + x_j D\right) + 
    \left(\apxsmallcontrib + (1-\delta) \apxrdemand- \alpha U - \size(\smallitems \cap A)\right)\,.
    \end{align*} 
    The first part of the summation is non-negative by Equation~(\ref{eq:nonnegsmallcase2}) and
    can be computed by Bob. The second part is the same as the second part
    in Equation~(\ref{eq:secondpartofFCI}). It is non-negative by definition and can be computed by Alice.
    This completes the proof.
\end{description}
This concludes the promised communication problem in the case where Alice is given an infeasible tuple $(A,F_1,F_2)$, and Bob is given a canonical feasible solution with $\widetilde{F}_3=\emptyset$. As argued in Section~\ref{sec:fci:notation}, this generalizes to any canonical feasible solution \emph{without} any restriction on $\widetilde{F}_3$.
\subsection{Randomized Protocol for Arbitrary Feasible Solutions.}
\label{sec:fci:protocol}
We now extend the communication protocol of canonical feasible solutions to arbitrary feasible solutions. 
To that end, we denote by $\mathcal{R} = \{ (x^1,y^1),(x^2,y^2),\dots,(x^r,y^r)\}$ the set of all canonical feasible solutions. 

In this non-restricted setting, Alice still gets an infeasible tuple $(A,F_1,F_2)$, but Bob gets a feasible solution $(x,y)$ that is not necessarily canonical, and the goal remains to compute the slack of the corresponding flow-cover inequality (i.e., Equation~(\ref{eq:FCI})) in expectation. We show that the communication protocol that we developed in the previous section can be used as a black-box to handle this general case, by noting that any feasible solution $(x,y)$ can be written as a convex combination of canonical feasible solutions $(x^1,y^1), (x^2,y^2),\dots, (x^r, y^r)$. 
In other words, there exists $\lambda_1,\lambda_2,\dots, \lambda_r \geqslant 0$, $\sum_{k=1}^r \lambda_k = 1$, such that
\begin{align}
\label{eq:convexexpansion}
  (x,y) = \sum_{k=1}^r \lambda_k (x^k,y^k)\,.
\end{align}
This is formalized in Lemma~\ref{lem:canonical}.

To see that this is enough, note that the expansion in Equation~(\ref{eq:convexexpansion}) of $(x,y)$ allows us to rewrite slack of the flow-cover inequalities in~(\ref{eq:FCI}) as
\begin{align*}
    & \rsize( F_1 \cap B)  + \xsize (F_2 \cap B)  - \alpha U \\
  = ~& \sum_{i \in F_1} s_i\rq{} \sum_{k=1}^r \lambda_k y^k_i + \sum_{i \in F_2} \sum_{k=1}^r \lambda_k x^k_i D - \alpha U \\
  = ~& \sum_{k = 1}^r \lambda_k 
  {\left( \sum_{i \in F_1} s_i\rq{} y^k_i + \sum_{i \in F_2} x^k_i D - \alpha U \right)}\,.
\end{align*}
Thus in order to compute the slack in expectation, Bob 
samples a canonical feasible solution $(x^k,y^k) \in \mathcal{R}$ with probability $\lambda_k$,
then together with Alice, they compute the slack of \begin{align*}
\sum_{i \in F_1} s_i\rq{} y^k_i + \sum_{i \in F_2} x^k_i D - \alpha U
\end{align*}
as discussed in the previous section. 

It remains to prove that any feasible solution can indeed be written as a convex combination of canonical feasible solutions. This is formalized in Lemma~\ref{lem:canonical}.
\begin{Lemma}\label{lem:canonical}
Let $\mathcal{R} = \{ (x^1,y^1),(x^2,y^2),\dots,(x^r,y^r)\}$ be the set of all the canonical feasible solutions for the flow cover problem, then any feasible solution $(x,y)$ can be written as 
\begin{align*}
  (x,y) = \sum_{k=1}^r \lambda_k (x^k,y^k)\,,
\end{align*}
such that $\lambda_k \geqslant 0$ for all $1 \leqslant k \leqslant r$, and $\sum_k \lambda_k = 1$.
\end{Lemma}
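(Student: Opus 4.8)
The plan is to show that the set of feasible solutions forms a polytope whose vertices are exactly the canonical feasible solutions, so that membership in the convex hull follows from Minkowski--Weyl. First I would observe that, for a \emph{fixed} support $B$ (i.e.\ a fixed $y \in \{0,1\}^n$), the set of $x$ for which $(x,y)$ is feasible is the polytope $P_B := \{x \in \R_{\geqslant 0}^n \mid \sum_i x_i = 1,\ x_i D \leqslant s_i \text{ for } i \in B,\ x_i = 0 \text{ for } i \notin B\}$. This is a (possibly empty) bounded polytope described by the equality $\sum_i x_i = 1$ together with box constraints $0 \leqslant x_i \leqslant s_i/D$. The key step is then to check that every vertex of $P_B$ is canonical: at a vertex, $n$ linearly independent constraints are tight, and since only one equality constraint is present, at least $n-1$ of the variables must hit one of their box bounds, i.e.\ satisfy $x_i = 0$ or $x_i D = s_i$. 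Hence at most one coordinate $j$ can have $0 < x_j D < s_j$, which is precisely the condition $|\widetilde F_2| \leqslant 1$ defining a canonical solution. (One should also note $\widetilde F_3 = \{i \in B : x_i = 0\}$ is allowed, consistent with Definition~\ref{def:canonical}, which only restricts $\widetilde F_2$.)

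With that in hand, I would finish as follows. Given an arbitrary feasible $(x,y)$, let $B$ be its support; then $x \in P_B$, and since $P_B$ is a nonempty bounded polytope, $x$ is a convex combination $x = \sum_k \lambda_k x^k$ of vertices $x^k$ of $P_B$. Each $(x^k, B)$ — equivalently $(x^k, y)$ since $y$ encodes $B$ — is a canonical feasible solution with the \emph{same} $y$, so $(x,y) = \sum_k \lambda_k (x^k, y)$ is a convex combination of canonical feasible solutions, all appearing among $(x^1,y^1),\dots,(x^r,y^r)$. Padding the remaining coefficients with zeros gives the statement as written.

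The main obstacle, such as it is, is purely bookkeeping: one must be careful that the $y^k$ in the claimed combination $\sum_k \lambda_k (x^k,y^k)$ are literally equal to $y$ (not merely supported on $B$), so that the second coordinates combine correctly; this is automatic once we fix the support and vary only $x$. A secondary point worth stating explicitly is that $P_B$ is nonempty whenever $(x,y)$ is feasible (it contains $x$ itself), so the vertex decomposition is available. No deeper argument — no appeal to total unimodularity or to the structure of the costs — is needed; the decomposition is genuinely just ``fix the open facilities, then the feasible flows form a simple transportation-type polytope whose vertices are near-integral.''
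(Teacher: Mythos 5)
Your proposal is correct and follows essentially the same route as the paper: the paper also fixes the support $B$ (w.l.o.g.\ $[n]$), forms the polytope $P(y)$ given by $\sum_i z_i = 1$ and $0 \leqslant z_i \leqslant s_i y_i / D$, argues that any vertex has at least $n-1$ tight box constraints and is therefore canonical, and concludes by decomposing $x$ into a convex combination of vertices with the same $y$. The only cosmetic difference is that the paper also records the converse characterization (every canonical solution with that support is an extreme point), which, as in your write-up, is not needed for the decomposition.
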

\begin{proof}
Given a feasible solution $(x,y)$, define its support $\widetilde{F}^{x,y}=\{i: i \in F, \text{ and } y_i = 1\}$,
and define the set $\mathcal{R}^{x,y}$ to be the set of all canonical feasible solutions whose support equals $\widetilde{F}^{x,y}$, i.e., 
\begin{align*}
  \mathcal{R}^{x,y} = \{(x',y): (x', y) \in \mathcal{R}\} \subseteq \mathcal{R}\,.
\end{align*}
Without loss of generality, we assume that $\widetilde{F}^{x,y} = [n]$ to simplify the presentation. 

We now consider the following polytope $P(y)$:
\begin{align*}
P(y) = \left\{\begin{array}{ll}
        z \in [0,1]^n, & \text{such that: } \\
        (*) \, \, \sum_{i=1}^n z_{i} = 1, & \\
        (**)\, \, 0 \leqslant z_i \leqslant \frac{s_i y_i}{D} & \text{ for all } 1 \leqslant i \leqslant n
        \end{array}\right\} \\
%P(y) = \left\{z \in [0,1]^n: (*) \, \, \sum_{i=1}^n z_{i} = 1, \quad (**)\, \, 0 \leqslant z_i \leqslant \frac{s_i y_i}{D} \text{ for all } 1 \leqslant i \leqslant n\right\}\,.
\end{align*}
Note that for any feasible solution $(x,y)$ to the flow cover problem, we have that $x \in P(y)$. Moreover, we get from Definition~\ref{def:canonical} that for any canonical feasible solution $(x',y) \in \mathcal{R}^{x,y}$, all except at most one item $i \in [n]$, either has $x'_i =0$ or $x'_i D = s_i y_i$. Thus $x'$ satisfies
at least $n-1$ linearly independent constraints of type $(**)$ with equality. 
Conversely, if a point $x \in P(y)$ satisfies at least $n-1$ constraints of type $(**)$ with equality, then $(x,y) \in \mathcal{R}^{x,y}$.
 
Recall that  a point $z$ is an \emph{extreme point} solution of $P(y)$ iff there are $n$ linearly independent constraints that are set to equality by $z$. 
Since constraint $(*)$ is an equality constraint and is linearly independent from any set of $n-1$ constraints from $(**)$, we conclude that
$\{x': (x',y) \in \mathcal{R}^{x,y}\}$ is the set of all extreme points of $P(y)$. 
This implies that for any $x \in P(y)$, there exists $\lambda_k \geqslant 0$ for each $1 \leqslant k \leqslant r$ such that $\sum_{k} \lambda_k = 1$ and 
\begin{align*}
x = \sum_{k=1}^r \lambda_k x^k\,.
\end{align*}
Since all these points have the same $y$-support, it follows that \begin{align*}
(x,y) = \sum_{k=1}^{r} \lambda_k (x^k,y^k)\,.
\end{align*}
\end{proof}

\section{Algorithmic Aspects.} \label{sec:cutting_plane}

Theorem~\ref{thm:main} relies on the \emph{existence} of a quasi-polynomial size
extended formulation for the weakened knapsack cover inequalities. However, 
we do not know how to \emph{construct} the full extended formulation in 
quasi-polynomial time. Nevertheless, there is a way to use the extended
formulation algorithmically, which we describe here.

We adopt a more general point of view, since the findings of this section 
are applicable beyond the context of the knapsack cover inequalities. 
Consider any system of $p$ inequalities $A_1 x \geqslant b_1$, \ldots,
$A_p x \geqslant b_p$, and $q$ solutions $x^{(1)}$, \ldots,
$x^{(q)} \in \R^n$ of this system. In the context of the min-knapsack 
problem, the inequalities $A_i x \geqslant b_i$ ($i \in [p]$) are all the 
weakened knapsack cover inequalities and the solutions $x^{(j)}$ 
($j \in [q]$) are all the feasible solutions $x \in \{0,1\}^n$. Typically,
both $p$ and $q$ are exponentially large as functions of $n$.

To this data corresponds a slack matrix $S \in \R^{p \times q}_{\geqslant 0}$
defined by $S_{ij} := A_i x^{(j)} - b_i$. As observed by 
Yannakakis~\cite{Yannakakis91}, every non-negative factorization 
$S = FV$ where $F \in \R^{p \times r}_{\geqslant 0}$ and 
$V \in \R^{r \times q}_{\geqslant 0}$ determines a system 
\begin{align}
\label{eq:slack_param} A_ix-b_i &= F_iy \quad \forall i \in [p]\\ 
\nonumber y &\geqslant 0
\end{align}
whose projection to the $x$-space gives a polyhedron 
$\{x \in \R^n \mid \exists y \in \R^r : Ax-b = Fy,\ y \geqslant 0\}$ 
containing each of the solutions $x^{(j)}$ and contained in each of 
the halfspaces $A_i x \geqslant b_i$. 

Usually, the number $p$ of equations in~(\ref{eq:slack_param}) 
is much bigger than both the number $n$ of $x$-variables and 
rank $r$ of the non-negative factorization. Thus the equation 
system is largely overdetermined and can be replaced by a smaller 
equivalent subsystem with at most $n + r$ equations. However, it 
is not obvious to tell efficiently what are the indices $i$ for 
which the corresponding equation in~(\ref{eq:slack_param}) 
should be kept.

To avoid this difficulty, we assume that the way in which we want to 
use the extended formulation (shorthand: EF) $Ax - b = Fy$, $y \geqslant 0$
is to solve the LP $\min \{c^\intercal x \mid Ax \geqslant b\}$ 
for a \emph{given} objective vector $c \in \R^n$, through the 
extended formulation.

For $I \subseteq [p]$, consider the linear program
\begin{equation*}
\mathrm{LP}(I): \quad
\begin{array}[t]{r@{\ }r@{\ }ll}
\min &\multicolumn{1}{@{}l@{}}{c^\intercal x}\\
\mathrm{s.t.} &A_i x - b_i &= F_i y &\forall i \in I\\ 
&y &\geqslant 0\,.
\end{array}
\end{equation*}
In fact, we will only need to consider sets $I$ of size
at most $n + r \ll p$.

Algorithm~\ref{algo:cutting_plane} solves the LP 
$\min \{c^\intercal x \mid Ax \geqslant b\}$ in several 
steps. In each step, it solves the smaller $\mathrm{LP}(I)$ 
where $I \subseteq [p]$ and calls a separation routine to 
check whether $x^*$, the $x$-part of the optimum solution 
found, satisfies $Ax \geqslant b$ or not. In the first case, 
it returns $x^*$ and stops. In the second case, it adds the 
index $i^*$ of any violated constraint to $I$ and continues. 
At the beginning of the algorithm, $I$ is initialized to $[n]$.
To avoid technicalities, we assume that $\mathrm{LP}([n])$
is bounded. For the sake of concreteness, we assume furthermore 
that the $n$ first inequalities of the system $Ax \geqslant b$
are the nonnegativity inequalities $x_1 \geqslant 0$, \ldots,
$x_n \geqslant 0$, and that $c \in \R^n_{\geqslant 0}$. 

\begin{algorithm}[h]
\caption{Cutting-plane algorithm to solve $\min \{c^\intercal x \mid Ax \geqslant b\}$ through EF $Ax - b = Fy$, $y \geqslant 0$}
\label{algo:cutting_plane}
\begin{algorithmic}[1]
\STATE{initialize $I \longleftarrow [n]$}
\STATE{initialize feasible $\longleftarrow$ \textbf{false}}
\REPEAT
\STATE{solve $\mathrm{LP}(I)$, get optimum solution $(x^*,y^*)$}
\IF{there exists $i^* \in [p]$ such that $A_{i^*} x^* < b_{i^*}$}
\STATE{add $i^*$ to $I$}
\ELSE
\STATE{set feasible $\longleftarrow$ \textbf{true}}
\ENDIF
\UNTIL{feasible $=$ \textbf{true}}
\STATE{return $x^*$}
\end{algorithmic}
\end{algorithm}

To analyze the running time of the algorithm, we make the following
assumptions:

\begin{itemize}
\item the size of each coefficient in~(\ref{eq:slack_param}) and
each $c_i$ is upper-bounded by $\Delta = \Delta(n)$;
\item the separation problem (given $x^* \in \R^n$, find an index $i^* \in [p]$ such that $A_{i^*} x < b_{i^*}$ or report that no such index exists) can be solved in $\Tsep(n)$ time;
\item each single equation in~(\ref{eq:slack_param}) can be written down in $\Tconstr(n)$ time;
\item $\mathrm{LP}(I)$ can be solved in time $\Tsolve(n)$ for any set $I$
of size at most $n + r$, where $r = r(n)$ is the rank of the nonnegative
factorization giving rise to the extended formulation $Ax - b = Fy$, $y \geqslant 0$.
\end{itemize}

Notice that $\Tsolve(n) = O(n^3 (n+r) \Delta)$ if an interior point method 
is used to solve $\mathrm{LP}(I)$.

\begin{Lemma} \label{lem:cutting_plane_analysis}
Under the above assumptions, the main loop of 
Algorithm~\ref{algo:cutting_plane} is executed at most 
$r + 1$ times. Thus the complexity of Algorithm~\ref{algo:cutting_plane} 
is $O(r \cdot (\Tsolve(n) + \Tsep(n) + \Tconstr(n)))$. 
\end{Lemma}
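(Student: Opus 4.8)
The plan is to bound the number of iterations by a pure linear-algebra argument on the equality system of $\mathrm{LP}(I)$, and then read off the running-time bound immediately. The running-time part is routine once the iteration count is in hand, so essentially all the work is in showing the loop runs at most $r+1$ times.

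First I would record two facts. (a) Since the factorization $S = FV$ is \emph{non-negative}, $F \geqslant 0$, so $F_i y \geqslant 0$ for every $y \geqslant 0$ and every $i \in [p]$. (b) Every linear program $\mathrm{LP}(I)$ arising during the execution is feasible and bounded, hence attains an optimum: it is feasible because it is a relaxation of the extended formulation, which is non-empty (it contains the $x^{(j)}$'s), and it is bounded because its feasible region is contained in that of the bounded program $\mathrm{LP}([n])$. Throughout, I view each constraint $A_i x - b_i = F_i y$ as a linear equation in the variables $(x,y) \in \R^{n+r}$ with coefficient row $(A_i, -F_i)$ and right-hand side $b_i$.

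The key claim is: \emph{in any iteration that does not terminate, the equation indexed by the newly added constraint $i^*$ is linearly independent from the equations indexed by the current set $I$.} To see this, let $(x^*,y^*)$ be the optimum of $\mathrm{LP}(I)$ found in that iteration, so that $A_i x^* - b_i = F_i y^*$ for all $i \in I$ and $y^* \geqslant 0$. As the iteration does not terminate, $i^*$ was found with $A_{i^*} x^* < b_{i^*}$, while $F_{i^*} y^* \geqslant 0$ by (a); hence $(x^*,y^*)$ does \emph{not} satisfy the equation $A_{i^*} x - b_{i^*} = F_{i^*} y$. If the row $(A_{i^*}, -F_{i^*})$ were a linear combination of the rows $\{(A_i, -F_i) : i \in I\}$, then either that equation would be implied by the equations of $I$ --- impossible, since $(x^*,y^*)$ satisfies the latter but not the former --- or the augmented equality system would be inconsistent, contradicting feasibility of $\mathrm{LP}(I \cup \{i^*\})$ from (b). So the new row is linearly independent, proving the claim. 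This degenerate ``dependent but inconsistent'' alternative is the only delicate point, and it is precisely where feasibility of the extended formulation is used.

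Now the bound follows by counting. Initially $I = [n]$, and since the first $n$ rows of $A$ are the non-negativity rows, the equations of $\mathrm{LP}([n])$ are $x_i = F_i y$ with coefficient rows $(e_i, -F_i)$ for $i \in [n]$, which are linearly independent (restrict to the first $n$ coordinates). By the claim and induction, once the loop has added $j$ constraints the equality system of the current $\mathrm{LP}(I)$ consists of $n + j$ linearly independent rows in $\R^{n+r}$, forcing $n + j \leqslant n + r$, i.e.\ $j \leqslant r$. Hence at most $r$ constraints are ever added, so the main loop is executed at most $r + 1$ times, and every $\mathrm{LP}(I)$ it solves has $|I| \leqslant n + r$, so the stated hypothesis on $\Tsolve$ applies. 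Each iteration performs one solve of $\mathrm{LP}(I)$ (time $\Tsolve(n)$), one separation call (time $\Tsep(n)$), and the write-down of the single new equation (time $\Tconstr(n)$), maintaining the system incrementally; multiplying by the $r+1$ iterations gives the claimed running time $O(r \cdot (\Tsolve(n) + \Tsep(n) + \Tconstr(n)))$.
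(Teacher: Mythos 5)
Your proof is correct and follows essentially the same route as the paper: the paper's argument is exactly the observation that each added equation is linearly independent of the current system (starting from the $n$ independent nonnegativity equations), giving $|I| \leqslant n + r$ and hence at most $r+1$ iterations. Your write-up simply supplies the justification the paper leaves implicit — using $F_{i^*} y^* \geqslant 0$ versus $A_{i^*}x^* < b_{i^*}$ to rule out an implied equation, and feasibility of the full extended formulation (witnessed by the columns of $V$) to rule out the ``dependent but inconsistent'' case — which is a faithful elaboration rather than a different approach.
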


\begin{proof}
The result follows directly from the simple observation that each time a 
new equation $A_{i^*} x - b_{i^*} = F_{i^*} y$ added to the system 
$A_{i} x - b_{i} = F_{i} y$ ($i \in I$), it is linearly independent 
from the current equations in the system. Notice that by assumption, 
the algorithm starts with $n$ linearly independent constraints. By the 
above observation, we always have $|I| \leqslant n + r$. 
\end{proof}

From now on, we assume that the non-negative factorization of the slack matrix $S$ comes from a communication protocol with non-negative outputs computing $S$ in expectation. The protocol is specified by a binary \emph{protocol tree}, in which each internal node is owned either by Alice or Bob, and each leaf corresponds to an output of the protocol. At each internal node $u$ owned by Alice, a \emph{branching probability} $\pbranch(i,u) \in [0,1]$ is given for each input $i \in [p]$ of Alice. Similarly for each internal node $v$ owned by Bob, we are given a branching probability $\qbranch(j,v) \in [0,1]$, where $j \in [q]$ is Bob's input. These branching probabilities specify the chance for the protocol of following the left branch. Finally, each leaf $\ell$ has a nonnegative number $\lambda(\ell) \in \R_{\geqslant 0}$ attached to it.

The corresponding extended formulation can be written as
\begin{align}
\label{eq:expected_output} A_i x - b_i &= \sum_{\ell\ \mathrm{leaf}} \preach(i,\ell) \cdot y_\ell \quad \forall i \in [p]\\ 
\nonumber y_\ell &\geqslant 0 \quad \forall \ell\ \mathrm{leaf}
\end{align}
where $\preach(i,u)$ denotes the probability of reaching node $u$ of the protocol tree on any input pair of the form $(i,*)$.

\begin{Lemma} \label{lem:single_constraint}
Let $\Delta$ be any number that is at least $\max \{-\log(\preach(i,\ell)) 
\mid i \in [p],\ \ell \textrm{ leaf },\ \preach(i,\ell) > 0\}$ and let 
$h$ denote the height of the protocol tree. For any fixed $i \in [p]$, 
one can write down the right-hand side of the corresponding equation in~(\ref{eq:expected_output}) in $O(2^h \Delta \log \Delta \log \log \Delta)$ 
time and $O(2^h \Delta)$ space.
\end{Lemma}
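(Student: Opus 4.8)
The plan is to compute, for the single fixed index $i$, all of the coefficients $\preach(i,\ell)$ that appear on the right-hand side of~(\ref{eq:expected_output}) by one sweep over the protocol tree, charging a single fast integer multiplication to each edge. The combinatorial starting point is that the tree is binary of height $h$, hence has at most $2^h$ leaves and $O(2^h)$ nodes in total, and that $\preach(i,\cdot)$ obeys a simple recurrence along edges: $\preach(i,\mathrm{root}) = 1$; if a node $u$ has parent $v$ owned by Bob then $\preach(i,u) = \preach(i,v)$, since Bob's branching does not enter Alice's reaching probability; and if the parent $v$ is owned by Alice then $\preach(i,u)$ equals $\preach(i,v)\cdot\pbranch(i,v)$ or $\preach(i,v)\cdot(1-\pbranch(i,v))$, according to whether $u$ is the left or the right child of $v$ (the tree and the probabilities $\pbranch$ being part of the given protocol). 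Hence a single depth-first traversal that carries along the current partial product computes $\preach(i,u)$ for every node $u$ --- in particular for every leaf --- performing only $O(2^h)$ arithmetic operations, one per edge; this is what avoids the spurious factor of $h$ that a leaf-by-leaf recomputation would incur.

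Next I would control the bit-sizes. Every positive coefficient $\preach(i,\ell)$ lies in $[2^{-\Delta},1]$, by the choice of $\Delta$ and the fact that it is a probability, and it is a product of at most $h$ branching probabilities; writing the branching probabilities as fractions over a common denominator $Q$ --- which sidesteps repeated gcd reductions and keeps every denominator a power of $Q$ --- each $\preach(i,u)$ becomes a fraction whose numerator and denominator are integers on $O(\Delta)$ bits (here $\Delta$ is also taken to bound the bit-length of the coefficients, in line with the size convention underlying~(\ref{eq:slack_param})), and multiplying the $i$th equation of~(\ref{eq:expected_output}) through by $Q^{h}$ turns it into one with integer coefficients of $O(\Delta)$ bits. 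Moreover, every partial product $\preach(i,v)$ met at an internal node is either $0$ or --- by following, below $v$, a branch of positive probability at each of Alice's nodes --- bounded below by $\preach(i,\ell)$ for some leaf $\ell$ in the subtree of $v$ with $\preach(i,\ell)>0$, hence again at least $2^{-\Delta}$; so no intermediate value ever exceeds $O(\Delta)$ bits.

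Finally I would sum the costs. The traversal performs $O(2^h)$ multiplications of integers on $O(\Delta)$ bits, each of which costs $O(\Delta\log\Delta\log\log\Delta)$ time with a fast multiplication algorithm (for instance Sch\"onhage--Strassen), for a total of $O(2^h\Delta\log\Delta\log\log\Delta)$; the additions, the evaluations of the $\pbranch(i,v)$, and the precomputation of the powers $Q, Q^2, \dots, Q^h$ are all dominated by this. The output consists of at most $2^h$ coefficients of $O(\Delta)$ bits together with the corresponding variable names, i.e.\ $O(2^h\Delta)$ space, and since the working partial products may be overwritten during the traversal the total space remains $O(2^h\Delta)$. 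The step I expect to be the main obstacle is precisely the size bookkeeping of the second paragraph: checking that the chosen representation of the $\preach(i,\ell)$, and of every partial product along the way, genuinely fits in $O(\Delta)$ bits, and that the common-denominator device is set up so that no gcd computation or denominator blow-up sneaks an extra logarithmic factor into the claimed running time.
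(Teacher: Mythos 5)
Your proposal is correct and follows essentially the same route as the paper: compute $\preach(i,u)$ for all nodes by the recurrence (root $=1$, copy at Bob's nodes, multiply by $\pbranch$ or $1-\pbranch$ at Alice's nodes) in a single pass over the $O(2^h)$-node tree, with each operation on $O(\Delta)$-bit numbers performed via Sch\"onhage--Strassen multiplication. Your common-denominator bookkeeping is simply a more explicit justification of the paper's one-line assertion that all intermediate quantities fit in $O(\Delta)$ bits.
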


\begin{proof}
Clearly, at the root of the protocol tree, we have $\preach(i,\mathrm{root}) = 1$. At an internal node $u$ owned by Alice with left child $v$ and right child $w$, we have $\preach(i,v) = \preach(i,u) \cdot \pbranch(i,u)$ and $\preach(i,w) = \preach(i,u) \cdot (1-\pbranch(i,u)) = \preach(i,u) - \preach(i,v)$. In case $u$ is owned by Bob, we simply have $\preach(i,v) = \preach(i,w) = \preach(i,u)$ since the behavior of the communication protocol at node $u$ on input pair $(i,j)$ is independent of $i$.

Using this, we can compute recursively $\preach(i,u)$ for all nodes $u$ of the protocol tree, and thus for the leaves of the tree. All arithmetic operations are performed on numbers of at most $O(\Delta)$ bits. If we use the Schoolbook algorithm for subtraction and the Sch\"onhage-Strassen algorithm for multiplication, we obtain the claimed bounds for the time- and space-complexity.
\end{proof}

Now, we discuss how Algorithm~\ref{algo:cutting_plane} and its analysis
apply to the (weakened) knapsack cover inequalities and the corresponding 
slack matrix $(S^{\varepsilon}_{ab})_{a \in f^{-1}(0),\ b \in f^{-1}(1)}$
as in~(\ref{eq:relaxedKCIslack}), where $f$ is the weighted threshold
function defining the knapsack. In order to do that, we first have to 
construct the protocol tree of the protocol described in the proof of 
Lemma~\ref{lem:mainlemma}. We claim that this can be done in time
$(1/\varepsilon)^{O(1)} n^{O(\log n)}$. 

The protocol has several deterministic parts (in which the branching
probabilities are in $\{0,1\}$ locally). Each corresponds to the resolution 
of a Karchmer-Wigderson game. For writing down the corresponding subtrees 
of the protocol tree, we just need $\log^2(n)$-depth monotone circuits 
of fan-in $2$ for computing certain truncations of the weighted 
threshold function $f$. The translation of the circuit into a protocol 
tree follows the standard construction of Karchmer and 
Wigderson~\cite{KM90}. For constructing the circuits, we rely 
either on the construction of Beimel and Weinreb~\cite{BW05,BW06} 
or the simpler and more recent construction of Chen, Oliveira and 
Servedio~\cite{COS15}. Both constructions can be executed in $n^{O(1)}$ 
time. 

The remaining parts of the protocol can be readily translated
into the corresponding subtrees of the protocol tree.

Since the reaching probabilities in the protocol tree can be 
written down with $O(\log n)$ bits, each coefficient in the 
right-hand side of~(\ref{eq:expected_output}) can be written 
down in $O(\log n)$ bits. Assuming as before that all item 
sizes and demand can be written down with $O(n \log n)$ bits 
(which is without loss of generality), the coefficients of the
left-hand side of~(\ref{eq:expected_output}) can be written 
down with $O(n \log n)$ bits. Therefore, we can take 
$\Delta = O(n \log n)$ 

From what precedes and Lemma~\ref{lem:single_constraint}, 
we have that $\Tconstr(n) = (1/\varepsilon)^{O(1)} n^{O(\log n)}$.
Moreover, Lemma~\ref{lem:mainlemma} gives $r(n) = 
(1/\varepsilon)^{O(1)} n^{O(\log n)}$.

For the separation routine, we deviate significantly from
Algorithm~\ref{algo:cutting_plane}: instead of using an exact
separation routine (efficient exact separation of the knapsack 
cover inequalities is an open problem), we rely on a separation
trick from Carr \emph{et al.}~\cite{CFLP00}. That is, we
simply check if the knapsack cover inequality for $A := \{i \in [n] 
\mid x^*_i \geqslant 1/2\}$ is satisfied. This is enough to
guarantee that the modified Algorithm~\ref{algo:cutting_plane}
computes a quantity that is within a $2 + \varepsilon$ factor
of the integer optimum for that particular cost function $c$. 
Unfortunately, by relying on the pseudo-separation of 
Carr \emph{et al.}, we cannot guarantee that the modified
Algorithm~\ref{algo:cutting_plane} optimizes exactly over 
all weakened knapsack cover inequalities.

If we further assume that the coefficients of $c$ can be 
written with $O(n \log n)$ bits, we conclude that one can 
find a $(2+\varepsilon)$-approximation of $\min 
\{\sum_{i=1}^n c_i x_i \mid \sum_{i=1}^n \size_i x_i 
\geqslant \demand,\ x \in \{0,1\}^n\}$ in time 
$(1/\varepsilon)^{O(1)} n^{O(\log n)}$, without relying
on the ellipsoid algorithm, using our extended formulation.

\section{Conclusion.}

After the recent series of strong negative results on extended 
formulations, we have presented a positive result inspired by 
a connection to monotone circuits. Namely, we obtain the first 
quasi-polynomial-size LP relaxation of min-knapsack with constant 
integrality gap from polylog-depth circuits for weighted threshold
functions.

This result sheds new light on the approximability of min-knapsack 
via small LPs by connecting it to the complexity of monotone circuits.
For instance, it follows from our results that proving that no 
$n^{O(1)}$-size LP relaxation for min-knapsack can have integrality 
gap at most $\alpha$ for some $\alpha > 2$ would rule out 
the existence of $O(\log n)$-depth monotone circuits with bounded
fan-in for weighted threshold functions on $n$ inputs, which is 
an open problem.

Finally, let us further mention two open questions following this work. First,
it would be interesting to find an efficient (quasi-polynomial time) procedure
to explicitly write down our linear program for min-knapsack. Second, it would
be interesting to understand whether there is a ``combinatorial'' interpretation of
our relaxation.

%\bibliographystyle{plain}

%\bibliography{knapsack}

\end{document}